\definecolor{ForestGreen}{rgb}{0.0333,0.4451,0.0333}
\definecolor{DarkRed}{rgb}{0.65,0,0}
\definecolor{Red}{rgb}{1,0,0}
\newcommand{\eps}{\epsilon}
\newcommand{\poly}{\operatorname{poly}} 
\crefname{claim}{Claim}{Claims}
\definecolor{mygreen}{rgb}{0,0.65,0.66}
\newcommand{\U}{U}
\newcommand{\sep}{\operatorname{sep}}
\newcommand{\spa}{\operatorname{spars}}
\global\long\def\disperse{\mathrm{disperse}}%
\newcommand{\mcD}{\mathcal{D}}
\newcommand{\copies}{\text{copies}}
\renewcommand{\l}{\ell}
\def\mathcolor#1#{\@mathcolor{#1}}
\def\@mathcolor#1#2#3{%
  \protect\leavevmode
  \begingroup
    \color#1{#2}#3%
  \endgroup
}
\newcommand*{\vsepfbox}[1]{%
  \begingroup
    \sbox0{\fbox{#1}}%
    \setlength{\fboxrule}{0pt}%
    \mbox{\kern-\fboxsep\fbox{\unhbox0}\kern-\fboxsep}%
  \endgroup
}
\theoremstyle{plain} \numberwithin{equation}{section}
\newtheorem{theorem}{Theorem}[section]
\numberwithin{theorem}{section}
\newtheorem{lemma}[theorem]{Lemma}
\newtheorem{definition}[theorem]{Definition}
\theoremstyle{definition}
\newcommand{\subalign}[1]{%
  \vcenter{%
    \Let@ \restore@math@cr \default@tag
    \baselineskip\fontdimen10 \scriptfont\tw@
    \advance\baselineskip\fontdimen12 \scriptfont\tw@
    \lineskip\thr@@\fontdimen8 \scriptfont\thr@@
    \lineskiplimit\lineskip
    \ialign{\hfil$\m@th\scriptstyle##$&$\m@th\scriptstyle{}##$\hfil\crcr
      #1\crcr
    }%
  }%
}
\newcounter{note}[section]
\newcommand{\authorFontSize}{\normalsize}
\title{Simple Length-Constrained Expander Decompositions}
\author{
\authorFontSize
\begin{tabular}{@{\hskip 1.5em}c@{\hskip 3em}c@{\hskip 1.5em}}
  Greg Bodwin\thanks{Supported in part by NSF grant AF-2153680.} &
  Bernhard Haeupler\thanks{This research was partially funded by the Ministry of Education and Science of Bulgaria (support for INSAIT, part of the Bulgarian National Roadmap for Research Infrastructure) and by the European Research Council (ERC) under the European Union's Horizon~2020 research and innovation program (ERC grant agreement 949272).} \\[0.1em]
  University of Michigan &
  INSAIT, University of Sofia ``St.\ Kliment Ohridski'' \\[-0.2em]
   & and ETH Zürich \\[0.8em]
  D Ellis Hershkowitz\thanks{Supported in part by NSF grant CCF-2403236.} &
  Zihan Tan \\[0.1em]
  Brown University &
  University of Minnesota
\end{tabular}
}
\date{}
\begin{document}
\maketitle

\begin{abstract}
 Length-constrained expander decompositions are a new graph decomposition that has led to several recent breakthroughs in fast graph algorithms. Roughly, an $(h, s)$-length $\phi$-expander decomposition is a small collection of length increases to a graph so that nodes within distance $h$ can route flow over paths of length $hs$ while using each edge to an extent at most $1/\phi$. Prior work showed that every $n$-node and $m$-edge graph admits an $(h, s)$-length $\phi$-expander decomposition of size $\log n \cdot s n^{O(1/s)} \cdot \phi m$.

 In this work, we give a simple proof of the existence of $(h, s)$-length $\phi$-expander decompositions with an improved size of $s n^{O(1/s)}\cdot \phi m$. Our proof is a straightforward application of the fact that the union of sparse length-constrained cuts is itself a sparse length-constrained cut. In deriving our result, we improve the loss in sparsity when taking the union of sparse length-constrained cuts from $\log ^3 n\cdot s^3 n^{O(1/s)}$ to $s\cdot n^{O(1/s)}$. 
\end{abstract}

\thispagestyle{empty}

\newpage

\setcounter{page}{1}

\section{Introduction}
Length-constrained expander decompositions are an emerging paradigm which has led to several recent breakthroughs in fast graph algorithms \cite{haeupler2022hop,haeupler2024new,haeupler2023parallel,haeupler2024dynamic,haeupler2025length,haeupler2024low, haeupler2025parallel,haeupler2025cut}. For example, recent work gave algorithms that use length-constrained expander decompositions on $n$-node graphs with $m$ edges to:
\begin{itemize}
    \item Solve $O(1)$-approximate min cost $k$-commodity flow in time $(m+k)^{1+\eps}$ for any constant $\eps > 0$ \cite{haeupler2024low} ; prior to this work, there was no $o(mk)$ time  $o(\log n)$-approximation.
    \item  Give $O(1)$-approximate deterministic distance oracles with $O(\log \log n)$ query time and $n^\eps$-worst case update time for any constant $\eps > 0$ \cite{haeupler2024dynamic}; prior to this work, there were no such $o(n)$-approximate, $n^{o(1)}$ query time and worst case subquadratic update time constructions.
    \item Solve undirected $(1+\eps)$-approximate min cost flow in parallel with $\hat{O}(m)$ work and  $\hat{O}(1)$ depth for any constant $\eps > 0$ \cite{haeupler2025parallel}; all prior $\hat{O}(m)$ work algorithms  had $\Omega(m)$ depth.\footnote{We use $\hat{O}$ to hide $n^{o(1)}$ factors.} 
\end{itemize}

The easiest way to define length-constrained expander decompositions is in terms of flows; we do so informally here. Given a graph, an \textbf{$h$-length demand} is a function that maps pairs of nodes within distance $h$ to positive integers. A graph is an \textbf{$(h,s)$-length $\phi$-expander} if any $h$-length demand in which each node sends and receives at most its degree can be routed by a multi-commodity flow over length $hs$ paths that sends at most (about) $1/\phi$ flow over any edge. \textbf{An $(h,s)$-length $\phi$-expander decomposition} is a collection of length increases which renders the graph an $(h,s)$-length $\phi$-expander; notice that by increasing the distance between nodes we restrict which demands are $h$-length, making it easier for the graph to be an $(h,s)$-length $\phi$-expander. The above works all use the fact that length-constrained expander decompositions slightly modify the graph so as to control both the amount of flow over edges and the lengths of paths used by flows.  A (non-length-constrained) expander and expander decomposition is defined in the same way but without restrictions on demand pair distances and routing path lengths; see \cite{saranurak2019expander}.

A slightly less intuitive  but equivalent way of defining $(h,s)$-length $\phi$-expanders is as graphs with no sparse length-constrained cuts. In fact, in this work we will use this cut characterization rather than the above flow characterization. The particular sense of a sparse length-constrained cut is given by an \textbf{$(h,s)$-length $\phi$-sparse cut}---very roughly, about $\phi\cdot X$ total length increases that make $X$ disjoint pairs of edges within distance $h$ at least distance $hs$-far for some $X > 0$. In the interest of stating our results quickly, we defer a formal definition of $(h,s)$-length $\phi$-sparse cuts and length-constrained expanders and decompositions in terms of these cuts to \Cref{sec:prelim}.

The important parameters of a length-constrained expander decomposition are the length slack and cut slack. The \textbf{length slack} is $s$, as described above. The length slack quantifies how much longer the paths we send flow over are than the ideal $h$. The \textbf{cut slack} of length-constrained expander decomposition $C$ is $|C|/(\phi m)$. Here, $|C|$ is the size of the decomposition---roughly, the total amount it increases length (see \Cref{def:movingcut}). Any such decomposition has size at least $\phi m$ so cut slack measures how much larger the decomposition is than the ideal. 

Prior work characterized and used the tradeoff between cut and length slack. Namely, \cite{haeupler2023parallel} proved the existence of $(h,s)$-length $\phi$-expander decompositions with length slack $s$ and cut slack $\log n \cdot s n^{O(1/s)}$ using the exponential demand, a technical way of smoothly measuring neighborhood size. \cite{haeupler2023parallel} then used decompositions with this tradeoff to demonstrate that $s$-parallel-greedy graphs have arboricity $\log ^ 3 n \cdot s^3 n^{O(1/s)}$.\footnote{See \Cref{sec:conventions} for a definition of arboricity.} Roughly, $s$-parallel-greedy graphs are built by repeatedly matching nodes at distance at least $s$; see \Cref{fig:pgGraph}. Formally, they are as below.
\begin{restatable}[$s$-Parallel-Greedy Graph]{definition}{sPGGraph}\label{dfn:PGGraph}
    Graph $G = (V,E)$ is $s$-parallel-greedy for $s\ge 2$ iff its edges decompose as $E = M_1 \sqcup M_2 \sqcup \ldots$ where for every $i$ we have:
\begin{enumerate}
    \item \textbf{Matchings:} $M_i$ is a matching on $V$;
    \item \textbf{Far Endpoints:} $d_{G_{i-1}}(u,v)> s$ for each $\{u,v\}\in M_i$;
\end{enumerate}
where $G_{i-1}$ is the graph $(V, \bigcup_{j < i}M_j)$ and $d_{G_{i-1}}$ gives distances in $G_{i-1}$ with length-one edges.
\end{restatable}
\begin{figure}
	\centering
	\begin{subfigure}[b]{0.32\textwidth}
		\centering
		\includegraphics[width=\textwidth,trim=0mm 0mm 0mm 0mm, clip]{./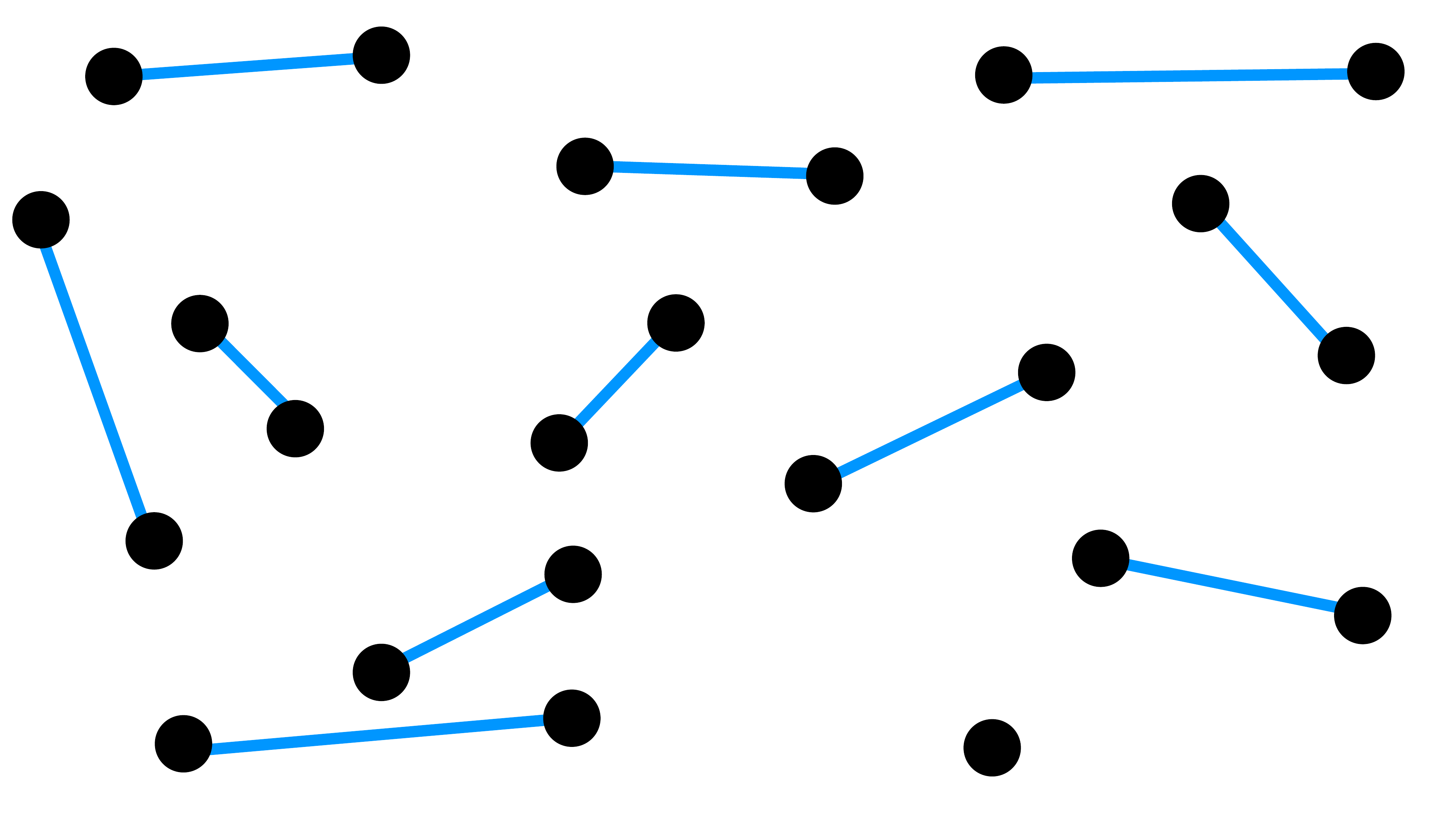}
		\caption{$M_1$}\label{sfig:pg1}
	\end{subfigure}    \hfill
	\begin{subfigure}[b]{0.32\textwidth}
		\centering
		\includegraphics[width=\textwidth,trim=0mm 0mm 0mm 0mm, clip]{./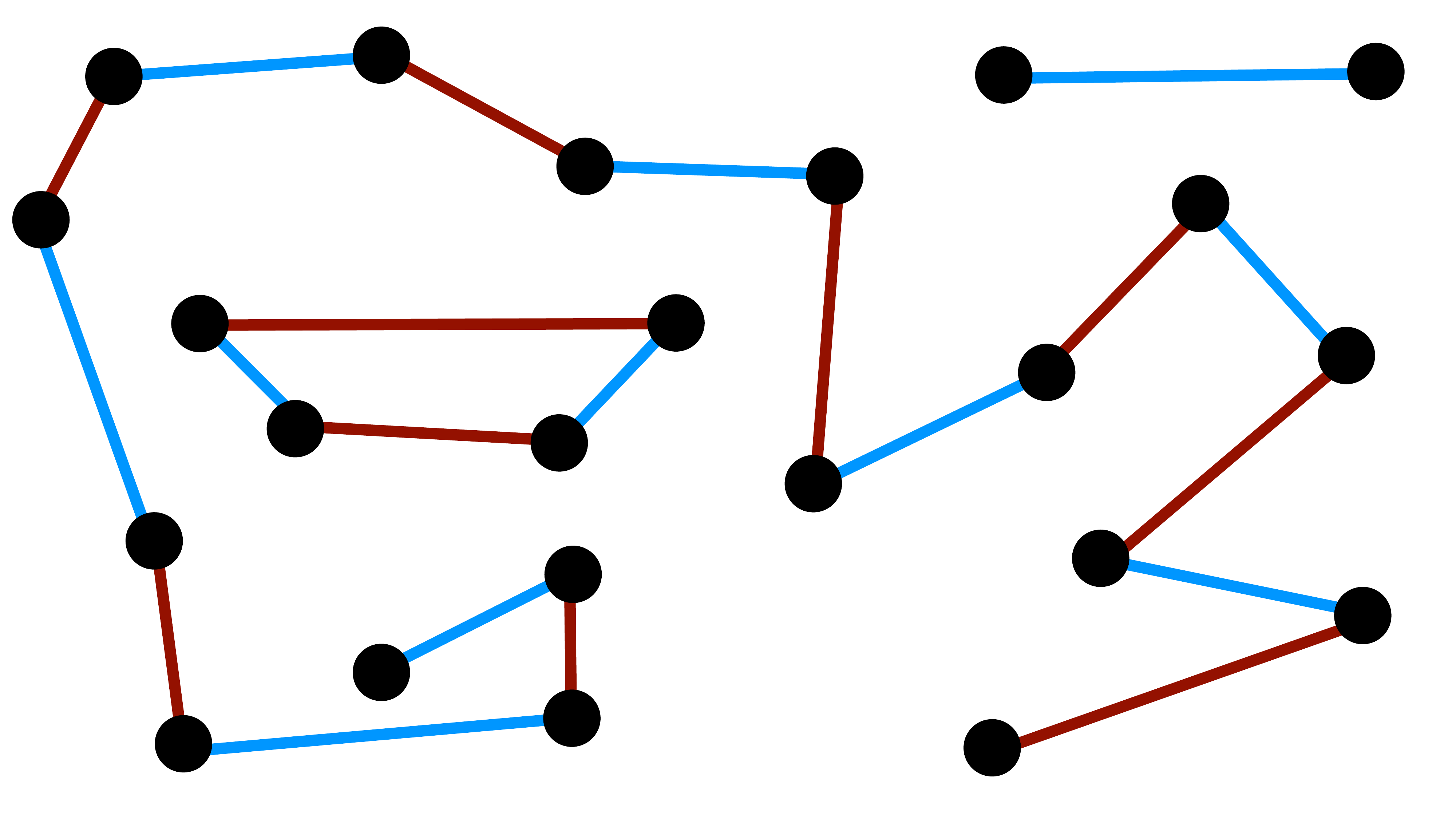}
		\caption{$M_1 \sqcup M_2$.}\label{sfig:pg2}
	\end{subfigure} \hfill
	\begin{subfigure}[b]{0.32\textwidth}
	\centering
	\includegraphics[width=\textwidth,trim=0mm 0mm 0mm 0mm, clip]{./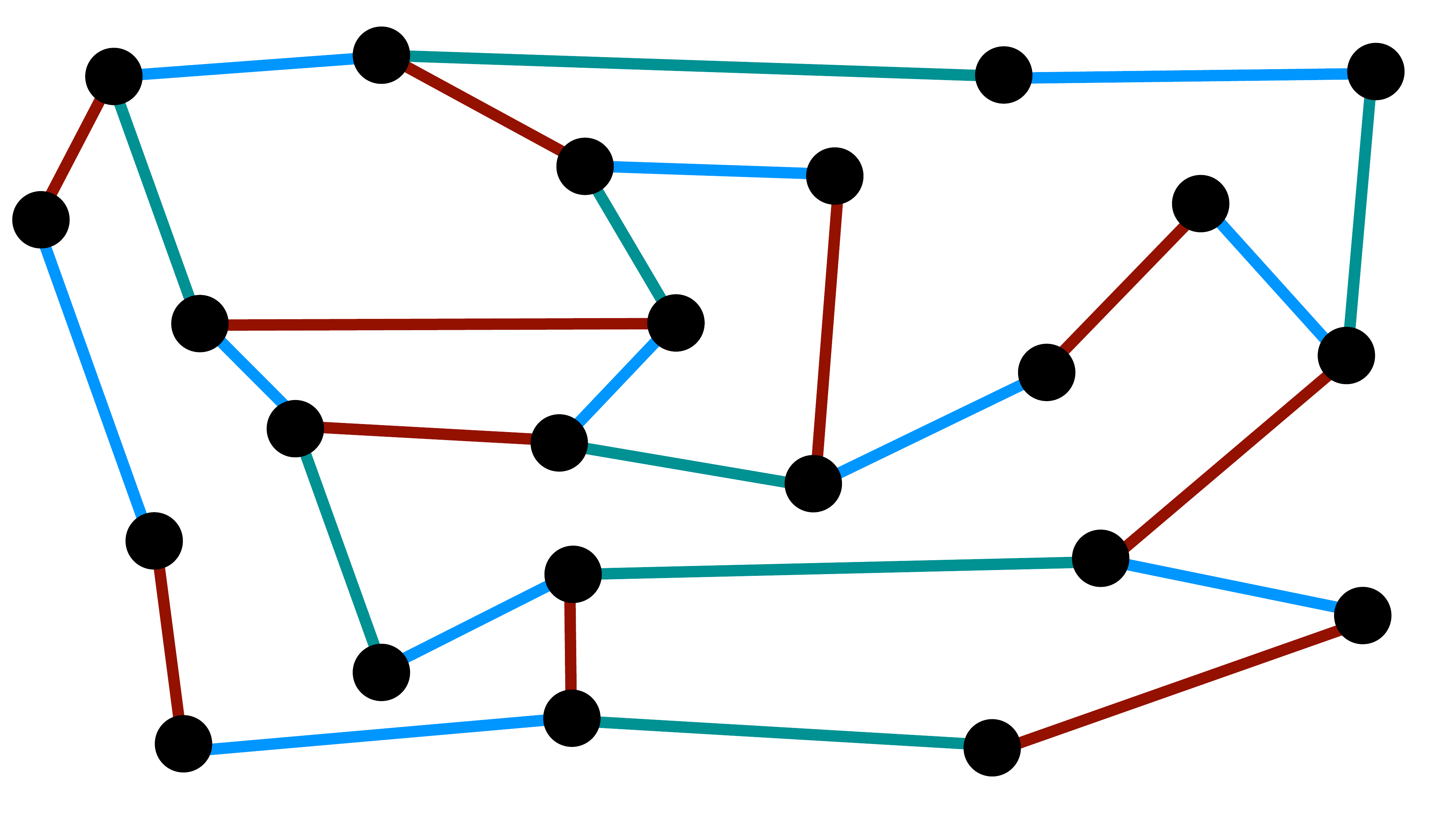}
	\caption{$M_1 \sqcup M_2 \sqcup M_3$.}\label{sfig:pg3}
	\end{subfigure}
	\caption{A $12$-parallel-greedy graph $G = (V, M_1 \sqcup M_2 \sqcup M_3)$.  \ref{sfig:pg3} gives the final graph $G$.}\label{fig:pgGraph}
\end{figure}
\noindent The arboricity bounds of \cite{haeupler2023parallel} are, perhaps, surprising. The graphs output by the greedy algorithm for computing an $s$-spanner is an $s$-parallel-greedy graph where each matching consists of a single edge \cite{althofer1993sparse}. These graphs are known to have arboricity $n^{O(1/s)}$, but typically this is shown by arguing that they have girth at least $s+2$.\footnote{The girth of a graph is the length of its shortest cycle.}
On the other hand, $s$-parallel-greedy graphs can have girth as small as $4$ for any $s$---see, for example, \Cref{sfig:pg3}---and so it is not clear that they should have bounded arboricity.

Later, \cite{haeupler2024new} used these arboricity bounds to make the tradeoff between length and cut slack algorithmic, albeit exponentially-worse than what is existentially-possible. In particular, \cite{haeupler2024new} gave close-to-linear time algorithms achieving length slack $s$ and cut slack $n^{O(1/\poly(\log s))}$.\footnote{We say $f(n) = \poly(n)$ if there exists a constant $c$ such that $f(n) = O(n^c)$.} The key way in which \cite{haeupler2024new} used these arboricity bounds was to show a ``union of length-constrained cuts'' fact: \cite{haeupler2024new} showed that, given a sequence of length-constrained cuts in which each is $(h,s)$-length $\phi$-sparse after applying all previous cuts, the union of these cuts is itself a $\phi'$-sparse $(h',s')$-length cut where $h' \approx h$, $s' \approx s$ and $\phi' \approx  \phi \cdot \alpha$ and $\alpha$ is the arboricity of an $s$-parallel-greedy graph. \cite{haeupler2024new} combined this with the arboricity bound of \cite{haeupler2023parallel} to instantiate this union of length-constrained cuts fact with a $\log ^3 n\cdot s^3 n^{O(1/s)}$ loss in sparsity. This, in turn, gave the foundation of the algorithm of \cite{haeupler2024new}. Similar algorithms based on the union of cuts in the non-length-constrained setting are known where an $O(\log n)$ loss in sparsity is possible \cite{chuzhoy2020deterministic,haeupler2024new}.  

\subsection{Our Contributions}
We give a simple proof of the existence of length-constrained expander decompositions with an improved cut slack of $s n^{O(1/s)}$. We achieve this by improving the union of length-constrained cuts fact by way of improved arboricity bounds for parallel-greedy graphs. Our main result is below.
\begin{restatable}[Existence of Length-Constrained Expander Decompositions---Simplified]{theorem}{LCEDExistSimple}
\label{thm:LCEDsSimple}
    For any $h \geq 1$, $s \geq 2$, $\phi > 0$ and any graph $G = (V,E)$ with edge lengths, there is an $(h,s)$-length $\phi$-expander decomposition  with cut slack $s \cdot n^{O(1/s)}$.
\end{restatable}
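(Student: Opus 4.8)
The plan is to derive \Cref{thm:LCEDsSimple} in three moves: (1) improve the arboricity bound for $s$-parallel-greedy graphs to $s \cdot n^{O(1/s)}$; (2) plug this bound into the ``union of length-constrained cuts'' fact of \cite{haeupler2024new}, which loses a factor of the parallel-greedy arboricity in sparsity, to conclude that any valid sequence of $(h,s)$-length $\phi$-sparse cuts unions to an $(h',s')$-length $\phi'$-sparse cut of the original graph with $h' = \Theta(h)$, $s' = \Theta(s)$ and $\phi' = \phi \cdot s \cdot n^{O(1/s)}$; and (3) assemble the decomposition by repeatedly applying $(h,s)$-length $\phi$-sparse cuts to $G$ until none remain. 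Steps (2) and (3) are, modulo the definitions deferred to \Cref{sec:prelim}, largely the bookkeeping already present in prior work; step (1) is where the new idea lies, so that is what I would focus on.

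For step (1) I would argue via degeneracy. Since a graph all of whose subgraphs have minimum degree at most $d$ has arboricity at most $d$, and since every subgraph of an $s$-parallel-greedy graph is again $s$-parallel-greedy (restricting each $M_i$ to the subgraph leaves a matching, and deleting edges or vertices only increases distances, so the far-endpoints condition survives), it suffices to show that an $n$-node $s$-parallel-greedy graph of minimum degree $\delta$ satisfies $n \ge (\delta/O(s))^{\Omega(s)}$, equivalently $\delta \le s\cdot n^{O(1/s)}$. The mechanism is a ball-growth / neighborhood-merging estimate. Fix $\rho = \lfloor s/2 \rfloor$ and, writing $G = M_1 \sqcup M_2 \sqcup \cdots$, track the radius-$j$ ball sizes $N_i(v,j) := |B_{G_i}(v,j)|$ (with unit-length edges) for $j \le \rho$ as matchings are added. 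Whenever $\{u,v\} \in M_i$, the condition $d_{G_{i-1}}(u,v) > s$ forces $B_{G_{i-1}}(u,\rho)$ and $B_{G_{i-1}}(v,\rho)$ to be disjoint, which gives $N_i(u,j) \ge N_{i-1}(u,j) + N_{i-1}(v,j-1)$ and symmetrically for $v$; a vertex of final degree at least $\delta$ participates in at least $\delta$ such merges, and iterating the coupled recursion over all of its matchings should force its radius-$\rho$ ball to contain roughly $(\delta/\rho)^{\rho} \approx (\delta/s)^{s/2}$ vertices, which cannot exceed $n$.

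I expect step (1) --- specifically, turning the per-merge inequality into genuine exponential ball growth --- to be the main obstacle. Unlike the greedy-spanner special case, where single-edge matchings force girth at least $s+2$ and a BFS tree really is a tree, $s$-parallel-greedy graphs can have girth $4$ for every $s$ (see \Cref{fig:pgGraph}), so one cannot simply say ``the ball is a tree, hence it grows.'' Worse, the far-endpoints condition is stated relative to the \emph{partial} graph $G_{i-1}$, not the final $G$, which introduces a time-circularity: to get geometric growth one wants the partner $v$ of a merge to already have a large ball at time $i-1$, but $v$'s ball may only become large later. Resolving this cleanly --- e.g.\ by choosing a potential that is monotone across matchings and tracking the whole vector of ball radii simultaneously, or by inducting on $s$ and peeling off the last $\Theta(1)$ blocks of matchings --- while paying only an $O(s)$ (rather than the old $\log^3 n \cdot s^3$) overhead is the crux; doing so is also what replaces the more involved exponential-demand argument of \cite{haeupler2023parallel}.

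For step (3), initialize $C$ to the empty cut and, while $G+C$ is not an $(h,s)$-length $\phi$-expander, pick an $(h,s)$-length $\phi$-sparse cut $C'$ of $G+C$ and replace $C$ by $C+C'$. By construction the cuts applied so far form a valid sequence for the union-of-cuts fact, so at every step $C$ is an $(h',s')$-length $\phi'$-sparse cut of $G$; since any such cut separates an $h'$-length demand respecting vertex degrees, whose total value is $O(m)$, we get $|C| \le \phi' \cdot O(m) = \phi\cdot m\cdot s\cdot n^{O(1/s)}$ throughout. Each applied cut may be taken with integral per-edge length increases of at most $h$ (only the first $h$ units of length matter to $h$-length demands) and has positive size, so $|C|$ strictly increases and stays bounded, hence the process terminates; at termination $G+C$ admits no $(h,s)$-length $\phi$-sparse cut and is therefore an $(h,s)$-length $\phi$-expander, with cut slack $|C|/(\phi m) = s\cdot n^{O(1/s)}$. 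The constant-factor gap between $(h',s')$ and $(h,s)$ is harmless: one runs the procedure with $h$ and $s$ scaled down by the appropriate constants, which only affects the hidden constant in the exponent.
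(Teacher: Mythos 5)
Your overall architecture matches the paper exactly: an improved arboricity bound for $s$-parallel-greedy graphs, fed into the union-of-cuts reduction of \cite{haeupler2024new} (\Cref{lem:UCviaPGArb}), and then a maximal sequence of $(h,s)$-length $\phi$-sparse cuts whose union is the decomposition, with its size controlled by the trivial bound that a $\phi'$-sparse cut has size at most $\phi'\cdot O(m)$ (\Cref{lem:limCutSize}). Steps (2) and (3) are essentially the paper's argument (modulo a cosmetic point: no rescaling of $h$ and $s$ is needed, since the $(h',s')$-sparsity of the union is used only to bound $|C|$, while the expander guarantee for $(h,s)$ comes directly from maximality of the sequence). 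The problem is step (1), which you yourself identify as the crux and then leave unresolved. Your ball-growth plan hinges on the partner $v$ of each merge already having a large radius-$\rho$ ball at time $i-1$, but the far-endpoints condition is relative to the partial graph $G_{i-1}$, and in a subgraph of minimum degree $\delta$ the partner may have degree $0$ at merge time, contributing only a single vertex per merge; nothing in your sketch compounds these contributions into $(\delta/s)^{\Omega(s)}$ growth. Gesturing at ``a monotone potential'' or ``inducting on $s$'' does not close this; as written, the argument only yields $N(u,1)\ge\delta$, i.e.\ no bound at all. Since the old arboricity bound of \cite{haeupler2023parallel} cannot be imported (it itself relies on the existence of length-constrained expander decompositions, which would make the argument circular), the statement is not proved.

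The paper closes exactly this gap by a different mechanism that bakes the temporal ordering into the combinatorial object rather than into ball sizes: it counts \emph{monotonic} $\frac{s}{2}$-paths, i.e.\ paths whose edges appear in increasing matching order. A dispersion lemma (\Cref{lem:disp}, via the cycle characterization \Cref{lem:PGAlt}) shows any vertex pair supports at most one such path, so there are $O(n^2)$ of them; a counting lemma (\Cref{lem:fullcount}, via the hiker argument plus random edge-sparsification) shows a graph of average degree $d\ge s$ contains at least $n\cdot\Omega(d/s)^{s/2}$ of them. Comparing gives $d\le O(s\cdot n^{2/s})$, and since subgraphs of $s$-parallel-greedy graphs are $s$-parallel-greedy, Nash--Williams (\Cref{lem:NW}) yields \Cref{thm:newPGArb}. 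Monotonicity is what defuses the time-circularity you ran into: the ordering constraint is checked path-by-path against the matching indices, so one never needs a partner's neighborhood to be large ``already.'' If you want to salvage your route, you would need to supply a genuinely new inductive potential; otherwise the dispersion/counting argument is the missing ingredient.
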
 
\noindent Our cut slack of $s \cdot n^{O(1/s)}$ improves on the $\log n \cdot s n^{O(1/s)}$ of \cite{haeupler2023parallel}. Also, our proof only assumes $s \geq 2$ whereas \cite{haeupler2023parallel} assumed $s \geq 100$. Likewise, our proof does not use the exponential demand, unlike that of \cite{haeupler2023parallel}.

Rather, our proof observes that the existence of a length-constrained expander decomposition is immediate from the aforementioned union of length-constrained cuts fact; we sketch how we prove the existence of a length-constrained expander decomposition from the union of length-constrained cuts fact now. Consider repeatedly applying $(h,s)$-length $\phi$-sparse cuts until no such cuts remain in the graph. Let $C$ be the union of these cuts and let $C$ be our $(h,s)$-length $\phi$-expander decomposition. The resulting graph is trivially an $(h,s)$-length $\phi$-expander since it contains no $(h,s)$-length $\phi$-sparse cuts. We now bound the cut slack of our decomposition $C$. Letting $h'$, $s'$ and $\phi'$ be the parameters from the above union of cuts fact, it is easy to see that no $(h', s')$-length $\phi'$-sparse cut can have size larger than $\phi'  m$ (since the size of such a cut is $\phi' \cdot X$ for some $X$ which is at most $m$). However, by the union of length-constrained cuts fact, $C$ is an $(h', s')$-length $\phi'$-sparse cut and so has size at most $\phi'm$. Thus, the cut slack of our length-constrained expander decomposition $C$ is $|C|/(\phi m) \leq \phi'/\phi$.

Why are we not immediately done from the above argument? To bound the cut slack, we need $\phi' / \phi$ to be small. However, prior to our work, $\phi'$ in the union of length-constrained cuts fact was only known to be $\log ^3 n\cdot s^3 n^{O(1/s)} \cdot \phi$. In other words, if one applied the previous union of length-constrained cuts fact in the above argument, one would get cut slack $\log ^3 n\cdot s^3 n^{O(1/s)} $ which is both worse than the known cut slack of $\log n \cdot s n^{O(1/s)}$ and certainly worse than the $s n^{O(1/s)}$ for which we are aiming. More importantly, as described above the union of length-constrained cuts fact with $\phi ' = \log ^3 n\cdot s^3 n^{O(1/s)} \cdot \phi$ is known by a bound of $\log ^3 n\cdot s^3 n^{O(1/s)}$ on the arboricity of $s$-parallel-greedy graphs. This arboricity bound, in turn, is known by the existence of length-constrained expander decompositions. As such, if the above argument uses the known union of length-constrained cuts fact, then it proves the existence of length-constrained expander decompositions assuming the existence of length-constrained expander decompositions! Rather, what is needed is an improved bound on the arboricity of $s$-parallel-greedy graphs that does not assume the existence of length-constrained expander decompositions.

We give exactly such a bound on the arboricity of parallel-greedy graphs, as stated below.
\begin{restatable}[Parallel-Greedy Graph Arboricity]{theorem}{PGArb}\label{thm:newPGArb}
If $G$ is an $n$-node $s$-parallel-greedy graph, then $G$ has arboricity at most $O(s \cdot n^{2/s})$.
\end{restatable}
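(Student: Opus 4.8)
I would prove \Cref{thm:newPGArb} by imitating the classical fact that a graph of girth at least $s+2$ has $O(n^{1+2/s})$ edges, while paying an extra factor of $s$ to absorb the short cycles that an $s$-parallel-greedy graph is permitted to contain. First a reduction: induced subgraphs of $s$-parallel-greedy graphs are again $s$-parallel-greedy, since if $H=G[S]$ is witnessed by the matchings $M_i\cap\binom{S}{2}$ then for $\{u,v\}\in M_i\cap\binom{S}{2}$ the graph $H_{i-1}$ is a subgraph of $G_{i-1}$ and hence $d_{H_{i-1}}(u,v)\ge d_{G_{i-1}}(u,v)>s$. By the Nash--Williams formula, arboricity equals $\max_{\emptyset\neq H\subseteq G}\lceil |E(H)|/(|V(H)|-1)\rceil$ and $|E(H)|\le |E(G[V(H)])|$, so it suffices to bound the edge count of an $n$-vertex $s$-parallel-greedy graph by $O(s\,n^{1+2/s})$; passing to the (induced, hence parallel-greedy) subgraph of minimum degree at least $|E|/|V|$, it suffices to show that an $s$-parallel-greedy graph of minimum degree $\ge\Delta$ has at least $(\Delta/\Theta(s))^{s/2}$ vertices, i.e.\ that its radius-$\lfloor s/2\rfloor$ balls are large.

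\textbf{Structural lemma.} Writing $t(e)$ for the index of the matching containing $e$, the key fact is: for $e=\{x,y\}\in M_i$ there is no $x$--$y$ path of length $\le s$ using only edges of time $<i$; equivalently, every cycle $C$ with $|C|\le s+1$ contains at least two edges attaining $\max_{f\in C}t(f)$, and (lying in a common matching) those two edges are vertex-disjoint. The proof is immediate: if a single edge $e^\star=\{u,v\}$ attained the maximum $i$ on $C$, then $C-e^\star$ would be a $u$--$v$ path of length $\le s$ among the edges of time $<i$, contradicting $d_{G_{i-1}}(u,v)>s$. Already this forces $s$-parallel-greedy graphs with $s\ge 4$ to be $K_{2,3}$-free: deleting any single edge, or any disjoint pair of edges, from a copy of $K_{2,3}$ leaves a path of length $3\le s$ between the endpoints of a deleted edge, and the maximum-time edges of the copy form exactly such a deletion. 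Since $\mathrm{ex}(n,K_{2,3})=O(n^{3/2})$, this already proves the theorem when $s=4$.

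\textbf{Ball growth via a time-respecting BFS.} For general $s$ I would run BFS from a root $r$ in the minimum-degree-$\Delta$ graph, but choose the BFS tree $T$ so as to prefer edges of \emph{small} time. Every non-tree edge $e$ inside $B(r,\lfloor s/2\rfloor)$ closes, together with the tree path between its endpoints, a cycle of length $\le s+1$; because $T$ favors small-time edges, $e$ is typically the unique maximum-time edge on that cycle, so the structural lemma forces a \emph{tree} edge $e'$ on the path between $e$'s endpoints with $t(e')=t(e)$. The non-tree edges routed through a fixed tree edge $e'$ all lie in $M_{t(e')}$ and hence form a matching, so a counting argument over tree edges should bound the total number of non-tree edges in $B(r,\lfloor s/2\rfloor)$ by $O(s)\cdot|B(r,\lfloor s/2\rfloor)|$. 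With collisions controlled at this level, each BFS level multiplies the ball size by $\Omega(\Delta/s)$, giving $|B(r,\lfloor s/2\rfloor)|\ge(\Omega(\Delta/s))^{s/2}$, as required.

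\textbf{Main obstacle.} The crux --- and the reason the bound carries a factor of $s$ --- is the last step. A short cycle need not have a unique maximum-time edge: a $4$-cycle always has two, and one can build subdivisions of $K_{2,3}$ whose maximum-time edges are spread out so that deleting them disconnects the configuration, so that no short ``detour'' is forced. Thus the ``unique maximum-time edge'' picture fails in general, and the real work is to show that choosing $T$ to lexicographically minimize its edge times restricts the structure of these multiply-maximum short cycles enough that the chords through each tree edge still number only $O(s)$ per ``bundle'' --- i.e.\ to convert the qualitative structural lemma into the quantitative $O(s)\cdot|B(r,\lfloor s/2\rfloor)|$ collision bound that yields the exponent $2/s$ rather than something worse.
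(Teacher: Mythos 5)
Your reduction via Nash--Williams and the closure of parallel-greediness under subgraphs matches the paper, and your structural lemma is exactly \Cref{lem:PGAlt}. But the heart of your argument --- the claim that a BFS tree chosen to ``prefer small-time edges'' forces, for each non-tree edge $e$ inside $B(r,\lfloor s/2\rfloor)$, a tree edge of the same matching on its fundamental cycle, and that the resulting charging yields only $O(s)\cdot|B(r,\lfloor s/2\rfloor)|$ non-tree edges --- is not established, and you say so yourself in your last paragraph. The failure mode is real: a BFS tree is constrained by the distance structure, so you cannot guarantee that $e$ is a maximum-time edge of its fundamental cycle; when some tree edge on the path has larger time, \Cref{lem:PGAlt} applied to that cycle says nothing about $e$ and the charge evaporates. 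Even when $e$ is charged, the non-tree edges charged to a fixed tree edge $e'$ form a matching in $M_{t(e')}$, which bounds them only by $|B|/2$, not by $O(s)$; you have no per-edge or aggregate bound on collisions, which is precisely the quantitative step that produces the exponent $2/s$. So as written this is a proof sketch with the decisive lemma missing, not a proof.

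It is worth seeing how the paper sidesteps exactly this obstacle: instead of growing balls, it counts \emph{monotonic} $\tfrac{s}{2}$-paths, i.e.\ paths whose edges appear in increasing matching order. For two such paths with the same endpoints, the symmetric difference contains a cycle of at most $s$ edges on which the maximum-time edge is forced to be \emph{unique} (it is the last edge of one of the two monotonic subpaths, and monotonicity puts every other edge strictly earlier), contradicting \Cref{lem:PGAlt}; this is the dispersion lemma (\Cref{lem:disp}), giving at most $O(n^2)$ such paths. On the other side, a ``hiker'' argument (\Cref{lem:weakCount}) plus random edge-sparsification (\Cref{lem:fullcount}) produces $n\cdot\Omega(d/s)^{s/2}$ monotonic $\tfrac{s}{2}$-paths when the average degree is $d\ge s$. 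Comparing the two counts gives $d\le O(s\,n^{2/s})$ with no need to control short cycles with several maximum-time edges --- monotonicity is the device that restores the ``unique maximum-time edge'' picture that your BFS approach loses. If you want to salvage your route, you would need to either inject a monotonicity-type restriction into which non-tree edges you count, or prove the collision bound for time-lexicographic BFS trees directly; the latter is the open part of your sketch, and your own $K_{2,3}$-subdivision examples indicate it will not follow from \Cref{lem:PGAlt} alone.
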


\noindent When $s$ is odd, we actually prove a slightly stronger bound of $O(s \cdot n^{2/(s+1)})$, but this difference does not matter for our applications.
The above improves on the arboricity bound of $\log ^3 n\cdot s^3 n^{O(1/s)}$ of \cite{haeupler2023parallel} and, unlike the proof of \cite{haeupler2023parallel}, does not assume the existence of length-constrained expander decompositions. Instead, it is based on a ``dispersion / counting'' framework used in recent work on graph spanners \cite{Bodwin25, bodwin2024fault}.

Leveraging the connection established by \cite{haeupler2024new} between the arboricity of parallel-greedy graphs and the loss in sparsity of the union of length-constrained cuts, we get the following new union of sparse length-constrained cuts fact. Below, an $(h,s)$-length $\phi$-sparse sequence of length-constrained cuts is one in which each length-constrained cut in the sequence is $(h,s)$-length $\phi$-sparse in the graph after all previous cuts in the sequence have been applied---see \Cref{def:sparsity} for a formal definition. 

\begin{restatable}[Union of Sparse Length-Constrained Cuts---Simplified]{theorem}{uOfCutsSimple}\label{thm:UofCutsSimple}
Fix $h \geq 1$, $s \geq 2$ and $\phi > 0$ and let $(C_1, C_2,\ldots )$ be an $(h,s)$-length $\phi$-sparse sequence of length-constrained cuts in a graph with edge lengths. Then $\left(1+\frac{1}{s-1}\right)\sum_i C_i$ is an $(h',s')$-length $\phi'$-sparse cut with
    \begin{align*}
        h' = 2h \text{\qquad and \qquad} s' = \frac{(s-1)}{2} \text{\qquad and \qquad} \phi' = s  
        n^{O(1/s)} \cdot \phi.
    \end{align*}
\end{restatable}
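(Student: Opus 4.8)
The proof rests on two ingredients: the reduction of \cite{haeupler2024new}, which bounds the sparsity loss of a union of length-constrained cuts by the arboricity of parallel-greedy graphs, and our new arboricity bound, \Cref{thm:newPGArb}. The plan is to invoke the former --- essentially as a black box --- and substitute the latter into it; all of the quantitative gain over \cite{haeupler2024new}'s instantiation comes from replacing the arboricity bound $\log^3 n \cdot s^3 n^{O(1/s)}$ of \cite{haeupler2023parallel} by $O(s \cdot n^{2/s})$. For concreteness I would reconstruct the structural core of the reduction, which (up to details, following \cite{haeupler2024new}) builds a parallel-greedy graph from the cut sequence \emph{in reverse}.

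Write $\hat G_i := G + \sum_{j \le i} C_j$, so $\hat G_0 = G$, and let $r$ be the number of cuts. Unpacking \Cref{def:sparsity}, each $C_i$ comes with a witness $P_i$: a matching of demand pairs, each within distance $h$ in $\hat G_{i-1}$ but at distance more than $hs$ in $\hat G_i$, together with $|C_i| = O(\phi \cdot |P_i|)$. Lay these witnesses down in reverse order to form the graph $\mathcal M$ on the relevant ground set of $G$ whose $k$-th matching is $M_k := P_{r-k+1}$. Then $\mathcal M$ is $s$-parallel-greedy: the matching condition is immediate, and for the far-endpoints condition, fix $\{u,v\} \in M_k = P_i$ with $i = r-k+1$ and suppose $d_{\mathcal M_{<k}}(u,v) \le s$, say via a walk $u = w_0, \dots, w_t = v$ with $t \le s$ and each $\{w_a, w_{a+1}\} \in M_{k_a} = P_{i_a}$ for some $k_a < k$, hence $i_a > i$. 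Since $i_a - 1 \ge i$, the graph $\hat G_{i_a - 1}$ contains every cut present in $\hat G_i$, so $d_{\hat G_i}(w_a, w_{a+1}) \le d_{\hat G_{i_a-1}}(w_a, w_{a+1}) \le h$; summing along the walk and applying the triangle inequality in $\hat G_i$ gives $d_{\hat G_i}(u,v) \le t h \le s h$, contradicting $d_{\hat G_i}(u,v) > hs$. By \Cref{thm:newPGArb}, $\mathcal M$ has arboricity $\alpha = O(s \cdot n^{2/s}) = s \cdot n^{O(1/s)}$, and in particular $\sum_i |P_i| = |E(\mathcal M)| \le \alpha \cdot |V(\mathcal M)|$.

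The rest is the bookkeeping of \cite{haeupler2024new}, which I would reuse rather than reprove. Set $\bar C := (1 + \tfrac{1}{s-1})\sum_i C_i$. Because moving cuts only increase distances and $\bar C$ dominates every prefix $\sum_{j \le i} C_j$ edgewise, every pair in $\bigcup_i P_i$ remains at distance more than $h(s-1) = h' s'$ in $G + \bar C$, so $\bar C$ separates all the witness pairs at the target scale; the arboricity bound then lets one extract from $\bigcup_i P_i$ a legal witness for $\bar C$ of size $\Omega\big(\tfrac1\alpha \sum_i |P_i|\big)$, while $|\bar C| = (1 + \tfrac1{s-1}) \sum_i |C_i| = O\big(\phi \sum_i |P_i|\big)$; dividing, $\bar C$ is an $(h',s')$-length $\phi'$-sparse cut with $\phi' = O(\alpha) \cdot \phi = s \cdot n^{O(1/s)} \cdot \phi$. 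The passage from $(h,s)$ to $(h',s') = (2h, \tfrac{s-1}{2})$ and the factor $1 + \tfrac1{s-1}$ are precisely the slack \cite{haeupler2024new} uses so that $\bar C$, read at the coarser length scale, stays a legal moving cut and the extracted pairs stay close enough in $G$.

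The main obstacle, given \Cref{thm:newPGArb}, is not conceptual: the reversed-order parallel-greedy construction and its triangle-inequality analysis are the substantive step, and they are sketched above. The fiddly part is the accounting in the previous paragraph --- verifying that the arboricity bound really yields a legal sub-witness of the stated size, and that the $(2h, \tfrac{s-1}{2})$ rescaling with the $1 + \tfrac1{s-1}$ factor is exactly what keeps $\bar C$ a valid sparse cut. Since \cite{haeupler2024new} carries this out (there with a weaker arboricity input), I would import it wholesale, and \Cref{thm:UofCutsSimple} then follows by substituting \Cref{thm:newPGArb}.
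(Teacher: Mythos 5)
Your proposal is correct and takes essentially the same route as the paper: the paper also obtains \Cref{thm:UofCutsSimple} by substituting the new arboricity bound (\Cref{thm:newPGArb}) into the union-of-cuts reduction of \cite{haeupler2024new} (\Cref{lem:UCviaPGArb}), and your reverse-order parallel-greedy construction from the cut witnesses is precisely the paper's demand-matching-graph argument (\Cref{lem:arbBound}). The accounting you import wholesale is the matching-dispersed demand built from a forest cover (\Cref{lem:sparseOfMatching}), which the paper likewise takes from \cite{haeupler2024new} and reproves only for completeness.
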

\noindent This is called a \emph{union} of cuts fact because, since we are dealing with general length increases, taking a union corresponds to taking a sum. Our loss in sparsity of $s n ^{O(1/s)}$ improves on the loss of $\log ^3 n\cdot s^3 n^{O(1/s)}$ from \cite{haeupler2024new}. The values of $h'$, $s'$ and the scaling by $(1+\frac{1}{s-1})$ is identical to \cite{haeupler2024new}. Applying our new union of length-constrained cuts fact with the above argument immediately gives the existence of length-constrained expander decompositions with length slack $s$ and cut slack $s \cdot n^{O(1/s)}$ as formalized in \Cref{thm:LCEDsSimple}.


In fact, our final results on the existence of length-constrained expander decompositions and the union of sparse length-constrained cuts are more general than the above---they apply to general edge-capacitated graphs and arbitrary node-weightings which are, roughly, functions that specify the parts of the graph in which we are interested. 
\Cref{thm:LCEDs} and \Cref{thm:UofCuts} give the more general versions of of \Cref{thm:LCEDsSimple} and
\Cref{thm:UofCutsSimple} respectively.


\section{Preliminaries}\label{sec:conventions}

Before moving on to our results, we introduce the notation and conventions that we use throughout this work and give a more formal definition of length-constrained cuts, expanders and expander decompositions.

\subsection{Graphs and Arboricity}
We make use of the following conventions regarding graphs and their arboricity.
\paragraph*{Graphs Conventions.}
Let $G=(V,E)$ be an undirected graph. We will use $n := |V|$ and $m:= |E|$ for the number of vertices and edges respectively. For $S \subseteq V$, we let $E(S)$ be all edges of $E$ with both endpoints in $S$.

We will associate two functions with the edges of graph $G$. We clarify these here.
\begin{enumerate}
    \item \textbf{Capacities.} We will let $\U = \{\U_e\}_e \in \mathbb{Z}_{\geq 0}^m$ be the capacities of edges of $E$. If no capacities are specified then each edge is treated as having capacity $1$. Informally, these capacities specify the maximum amount of flow which may go over an edge and the cost of cutting an edge. For each vertex $v\in V$, we denote by $\deg_{G}(v) = \sum_{e \ni v} \U_e$ the (capacity-weighted) degree of $v$ in $G$. We assume each $\U_e$ is $\poly(n)$.
    \item \textbf{Lengths.} We will let $ \l = \{\l_e\}_e\in \mathbb{R}_{\geq 0}^m$ be the \emph{lengths} of edges in $E$. These lengths determine the lengths with respect to which we are computing length-constrained expander decompositions. We will let $d_G(u,v)$ give the minimum length of a path in $G$ that connects $u$ and $v$ where the length of a path $P$ is $\l(P) := \sum_{e \in P} \l(e)$. Unlike with capacities, we do not assume that lengths are $\poly(n)$.\footnote{Prior work used length-constrained expanders in the context of length-one edges and talked about $h$-hop expanders and $h$-hop expander decompositions. We use general lengths and use ``length'' instead of ``hop'' where appropriate.}
\end{enumerate}

\paragraph{Graph Arboricity.} A \emph{forest cover} of graph $G$ consists of sub-graphs $F_1, F_2, \ldots, F_k$ of $G$ that are forests such that for $j\neq i$ we have $F_i$ and $F_j$ are edge-disjoint and every edge of $G$ occurs in some $F_i$. $k$ is called the size of the forest cover and the \emph{arboricity} $\alpha$ of $G$ is the minimum size of a forest cover of $G$. We will make use of the following famous Nash-Williams characterization of graph arboricity in terms of ``density.''
\begin{theorem}[\cite{nash1961edge,nash1964decomposition,chen1994short}]\label{lem:NW}
    Graph $G = (V,E)$ has arboricity at most $\alpha$ iff for every $U \subseteq V$ we have $|E(U)| \leq \alpha \cdot (|U|-1)$.
\end{theorem}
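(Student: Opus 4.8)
The plan is to prove the two implications separately. The forward implication is routine: if $F_1, \dots, F_\alpha$ is a forest cover of $G$ and $U \subseteq V$, then each $F_i$ induces a forest on the vertex set $U$, which therefore has at most $|U| - 1$ edges; since the $F_i$ are pairwise edge-disjoint and together contain every edge of $E$, summing over $i$ yields $|E(U)| \le \alpha(|U| - 1)$.

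For the reverse implication I would run an exchange argument. Among all families of $\alpha$ pairwise edge-disjoint forests in $G$ (not required to span $V$ or to cover $E$), fix one, say $F_1, \dots, F_\alpha$, maximizing $\sum_i |E(F_i)|$; such a family exists since $E$ is finite. It suffices to show this family covers $E$. Suppose not, and pick an uncovered edge $e_0$. For a forest $F$ and an edge $f \notin F$ whose endpoints lie in one component of $F$, let $C(F, f)$ denote the unique cycle of $F + f$. Note that if some uncovered edge $f$ has its endpoints in distinct components of $F_i$, then $F_i + f$ is a forest, still edge-disjoint from the other $F_j$, so replacing $F_i$ by $F_i + f$ increases $\sum_i |E(F_i)|$, contradicting maximality.

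Next I would close $e_0$ under cycles: let $D$ be the smallest set of edges with $e_0 \in D$ such that for every $f \in D$ and every $i$ with $f \notin F_i$ and the endpoints of $f$ in one $F_i$-component, all edges of $C(F_i, f)$ belong to $D$. Applying this closure rule repeatedly, if at some point we are forced to consider $f \in D$ with $f \notin F_i$ whose endpoints lie in \emph{different} $F_i$-components, then, tracing back the chain of cycle memberships that placed $f$ into $D$ and performing a corresponding sequence of swaps $F_j \gets F_j - f' + f''$, one obtains another maximum family in which $f$ is uncovered and $F_i + f$ is still a forest, again contradicting maximality. Otherwise $D$ is genuinely closed. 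Let $D'$ be the connected component of the subgraph $(V, D)$ containing $e_0$, and set $U := V(D')$, so $|U| \ge 2$ and $D' \subseteq E(U)$. The crux is the claim that $D' \cap F_i$ is a spanning tree of $U$ for every $i$: given $x, y \in U$, take an $x$--$y$ path inside $D'$ and replace each edge $g$ of it that is not in $F_i$ by the $F_i$-path between $g$'s endpoints, which lies inside $C(F_i, g) \subseteq D'$; this produces an $x$--$y$ walk inside $D' \cap F_i$, so $D' \cap F_i$ is connected, spans $U$, and hence has exactly $|U| - 1$ edges. Since $e_0 \in D'$ lies in no $F_i$ and the sets $D' \cap F_i$ are pairwise disjoint, $|E(U)| \ge |D'| \ge 1 + \sum_i |E(D' \cap F_i)| = 1 + \alpha(|U| - 1) > \alpha(|U| - 1)$, contradicting the density hypothesis. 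This contradiction shows the maximum family covers $E$, i.e.\ $G$ has arboricity at most $\alpha$.

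I expect the delicate step to be the augmentation above, where the cycle closure meets an edge whose endpoints fall in two components of some $F_i$: one must track how the reverse sequence of swaps alters each forest (in particular when an index repeats along the chain) and verify that after the swaps $F_i$ still accepts the edge without creating a cycle. A cleaner but less self-contained alternative is to invoke the matroid union theorem for the cycle matroid of $G$: decomposability of $E$ into $\alpha$ forests is equivalent to $|A| \le \alpha \cdot r(A)$ for all $A \subseteq E$, where $r$ is the matroid rank, and this general inequality follows from the special case $A = E(U)$ by splitting $A$ into the edge sets of the connected components of the subgraph $(V(A), A)$ and summing.
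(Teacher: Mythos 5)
The paper gives no proof of this statement at all---it is imported from Nash-Williams, with the short proof of \cite{chen1994short} cited---so your attempt is measured against those classical arguments, and your overall strategy (a family of $\alpha$ edge-disjoint forests maximizing total edge count, plus an exchange argument) is exactly theirs. Your forward direction is correct. Your fallback via the matroid covering/union theorem, with the componentwise reduction from a general edge set $A$ to sets of the form $E(U)$, is also correct as a reduction, but it assumes a strictly more general theorem of Nash-Williams, so it only counts as a proof if that result may be taken as known.

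The main exchange argument, however, has a genuine gap at precisely the step you flag. Your final count needs this property of the static closure $D$: for every $f\in D$ and every index $i$ with $f\notin F_i$, the endpoints of $f$ lie in a single component of $F_i$ (otherwise the rerouting that shows $D'\cap F_i$ connects and spans $U$ fails for that $i$). For $f=e_0$ this follows from maximality, but for a general $f\in D$ it does not, since $f$ is covered (by the forest whose cycle brought it into $D$) and so cannot simply be added to $F_i$. Your proposed repair---trace the derivation chain $e_0=g_0,g_1,\dots,g_t=f$ and perform the swaps $F_{j_l}\gets F_{j_l}+g_{l-1}-g_l$---is the right idea but is not yet a proof: the cycles along the chain were computed in the \emph{original} forests, so when an index $j_l$ repeats, the current $F_{j_l}$ has already been modified and $g_l$ need not lie on the unique cycle of the current $F_{j_l}+g_{l-1}$; removing it then leaves a cycle and the swap does not produce a forest. (The useful invariant that every valid swap preserves each forest's partition into components---so ``different components of $F_i$'' transfers to the final family---only helps once validity of every swap in the sequence is established.) This is exactly the careful page of work in the classical proofs, which avoid the issue by defining the relevant edge set dynamically (all edges that can be made uncovered by some legal sequence of exchanges from the maximum family) rather than as a closure with respect to the fixed initial forests. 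So either recast your argument in that dynamic form or genuinely handle the repeated-index case; as written, the key lemma is asserted rather than proved.
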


\subsection{Background on Length-Constrained Expander Decompositions}\label{sec:prelim}

We next review key definitions from previous work on length-constrained expander decompositions---primarily \cite{haeupler2022expander} and \cite{haeupler2024new}---of which we make use. Readers only interested in the arboricity bounds on parallel-greedy graphs can skip to \Cref{sec:PGBounds}.

Throughout this section, we assume we are given a graph $G$ with edge lengths and capacities as above a length-constraint $h \geq 1$, a length slack $s \geq 2$ and a sparsity $\phi \geq 0$.

\subsubsection{Demands and Node-Weightings}

Demands give a way of specifying how much flow one wants between endpoints in a graph and, more importantly for our purposes, which pairs are separated by a cut.

\begin{definition}[Demand]
A \emph{demand} $D:V\times V\rightarrow\mathbb{Z}_{\ge0}$ is a function that assigns a non-negative value to each ordered pair of vertices. The size of a demand $D$ is $|D| := \sum_{v,w} D(v,w)$. 
A demand $D$ is \emph{$h$-length} if $D(u,v) > 0$ only when $d_G(u,v) \leq h$.
\end{definition}


Node-weightings will give us a useful way of specifying a subset of a graph and, in particular, specifying what it means for a subset of a graph to be a length-constrained expander.

\begin{definition}[Node-Weighting]
    A \emph{node-weighting} is a function $A:V\rightarrow\mathbb{Z}_{\ge0}$ such that for each vertex $v$ we have $A(v) \leq \deg_G(v)$. 
\end{definition}
\noindent Notice that $\deg_G$ is itself a node-weighting. Throughout the rest of this section, we will define several things ``with respect to a node-weighting $A$.'' The same definitions hold not with respect to $A$ when we take $A$ to be $\deg_G$, i.e. when the definition holds with respect to the entire graph. For example, \Cref{def:sparsity} defines what it means for a length-constrained cut to be $(h,s)$-length $\phi$-sparse with respect to node-weighting $A$. Similarly, a cut is said to be simply $(h,s)$-length $\phi$-sparse iff it is $(h,s)$-length $\phi$-sparse with respect to $\deg_G$.

Once we have specified a subset of our graph using a node-weighting, we will be interested in only those demands which ``respect'' this node-weighting, as below.

\begin{definition}[Demand Respecting a Node-Weighting]\label{dfn:demandResp}
Given demand $D$ and node-weighting $A$, we say that $D$ is $A$-respecting if for each vertex $v$ we have $\max\{\sum_{w \in V}D(v,w),\sum_{w \in V}D(w,v)\} \leq A(v)$. 
\end{definition}

\subsubsection{Length-Constrained Cuts and Length-Constrained Cut Sparsity} We now define length-constrained cuts and what it means for a length-constrained cut to be sparse by using the above notion of demands and node-weightings.

The following is the sense of a length-constrained cut that we will use. In particular, we must use a notion of cuts which fractionally cuts edges because in the length-constrained setting, there are known large flow-cut gaps in the integral case \cite{baier2010length}. Here, a cut value of $1$ corresponds to fully cutting an edge. 
\begin{definition}[Length-Constrained Cut]\label{def:movingcut}
A length-constrained cut is a non-zero function $C: E \mapsto \mathbb{R}_{\geq 0}$. The \emph{size} of $C$ is  $|C|:=\sum_{e} \U_e \cdot C(e)$.
    
\end{definition}


\noindent If we interpret length-constrained cuts as inducing length increases, then we can formalize applying them in a graph as follows. The intuition behind this sense of applying a cut is: when we are working with a length constraint $h$, fully cutting an edge corresponds to increasing its length to $h$ since no path with length at most $h$ could ever use such an edge.
\begin{definition}[$G-C_h$]
Given $h \geq 1$ and graph $G$ with length function $l$, we let $G-C_h$ be the graph $G$ with length function $l +h \cdot C$.
\end{definition}
\noindent We will refer to $G-C_h$ as the result of \emph{applying $C$} to graph $G$.

Using this sense of applying a cut, we get the following length-constrained analogue of disconnecting two vertices, namely making them $h$-far.
\begin{definition}[$h$-Separation]
Given length-constrained cut $C$ and $h \geq 1$, we say nodes $u,v \in V$ are $h$-separated by $C$ if their distance in $G-C_h$ is strictly larger than $h$, i.e.\ $d_{G-C_h}(u,v) > h$.
\end{definition}
\noindent Similarly, we define how much a cut separates a demand.

\begin{definition}[$h$-Separated Demand]\label{dfn:sepDem}
We define the amount of demand $D$ separated by length-constrained cut $C$ as 
\begin{align*}
    \sep_{h}(C,D) := \sum_{(u,v): \text{ } h\text{-separated by } C} D(u,v),
\end{align*}
the sum of demand between vertices that are $h$-separated by $C$.
\end{definition}
\noindent If a demand $D$ and length-constrained cut $C$ are such that $\sep_h(C,D) = |D|$ then we will simply say that $C$ $h$-separates $D$.

$h$-separating an $h$-length demand is, generally speaking, too easily achieved. In particular, if each pair were at distance exactly $h$ then $h$-separating would only require making endpoints some tiny distance further. Rather, generally we will be interested in $hs$-separating an $h$ demand for some length slack $s \geq 2$. With this in mind, we can define the following notion of the demand-size of a length-constrained cut. Generally speaking, this is the length-constrained analogue of the ``volume'' of a cut---see \cite{haeupler2024new}. Likewise, here the node-weighting specifies the parts of the graph with respect to which we are measuring volume.

\begin{definition}[$(h,s)$-Length Demand-Size]\label{def:demandSize}
    Given length-constrained cut $C$ and node-weighting $A$, we define the $(h,s)$-length demand-size $A_{(h,s)}(C)$ of $C$ with respect to $A$ as the size of the largest $A$-respecting $h$-length demand which is $hs$-separated by $C$. That is,
    \begin{align*}
        A_{(h,s)}(C) := \max \left\{|D| : \text{$D$ is an $A$-respecting $h$-length demand with $\sep_{hs}(C,D) = |D|$}\right\}.
    \end{align*}
\end{definition}




\noindent We refer to the maximizing demand $D$ above as the \emph{witness} demand of $C$ with respect to $A$.

Having defined the length-constrained analogue of the volume of a cut, we can now define length-constrained sparsity which, analogous to the classic setting, is the size of a cut divide by its volume (or, in this case, the demand-size).

\begin{definition}[$(h,s)$-Length Sparsity of a Cut]\label{def:sparsity}
The $(h,s)$-length sparsity of length-constrained cut $C$ with respect to a node-weighting $A$ is
\begin{align*}
    \spa_{(h,s)}(C,A) = \frac{|C|}{A_{(h,s)}(C)},
\end{align*}
 the cut size divided by the demand-size.
\end{definition}
\noindent We will say that a cut $C$ is $(h,s)$-length $\phi$-sparse with respect to $A$ if $\spa_{(h,s)}(C,A) \leq \phi$.

We will be interested in applying a sequence of length-constrained cuts which are sparse after we have applied all previous cuts. The following formalizes this.
\begin{definition}[Sequence of Length-Constrained Cuts]\label{dfn:cutSeq}
Given graph $G = (V,E)$ and node-weighting $A$, a sequence of length-constrained cuts $(C_1, C_2, \ldots)$ is $(h,s)$-length $\phi$-sparse with respect to $A$ if each $C_i$ is $(h,s)$-length $\phi$-sparse in $G - \left(\sum_{j < i} C_j\right)_{hs}$.
\end{definition}
\noindent We will say that an $(h,s)$-length $\phi$-sparse sequence $(C_1, C_2, \ldots)$ is maximal if $G - \left(\sum_{i} C_i\right)_{hs}$ does not contain any $(h,s)$-length $\phi$-sparse cuts.


\subsubsection{Length-Constrained Expanders and Length-Constrained Expander Decompositions}\label{sec:h-length-expander-def}
We can now define a length-constrained expander as a graph which does not contain any sparse length-constrained cuts.

\begin{definition}[$(h,s)$-Length $\phi$-Expander] A graph $G$ with edge lengths and capacities is an $(h,s)$-length $\phi$-expander with respect to node-weighting $A$ if there does not exists a length-constrained cut that is $(h,s)$-length $\phi$-sparse with respect to $A$ in $G$.
\end{definition}


A length-constrained expander decomposition is simply a length-constrained cut which renders the graph a length-constrained expander. 

\begin{definition}[$(h,s)$-Length $\phi$-Expander Decomposition] \label{def:LCED}
Given graph $G$, an \emph{$(h,s)$-length $\phi$-expander decomposition} with respect to node-weighting $A$ with cut slack $\kappa$ and length slack $s$ is a length-constrained cut $C$ of size at most $\kappa \cdot \phi|A|$ such that $G-C_{hs}$ is an $(h,s)$-length $\phi$-expander with respect to $A$.
\end{definition}

\section{New Arboricity Bounds for Parallel-Greedy Graphs}\label{sec:PGBounds}
We now prove our new bounds on the arboricity of $s$-parallel-greedy graphs, redefined below.
\sPGGraph*
\noindent The below gives our new arboricity bound for parallel-greedy graphs, this section's main result.
\PGArb*
\noindent At a high level, we achieve our new bounds by applying a ``dispersion/counting'' proof framework to paths which are ``monotonic'' with respect to the parallel-greedy graph's matchings and which consist of $s/2$ edges. For this framework, we argue (1) a dispersion lemma which upper bounds  the number of such paths by arguing that they must be ``dispersed'' around the graph, rather than concentrated between two vertices and (2) a counting lemma which lower bounds the number of such paths in terms of the average degree. Combining these lemmas upper bounds the average degree which, since any subgraph of an $s$-parallel-greedy graph is an $s$-parallel-greedy graph, serves to upper bound the arboricity by \Cref{lem:NW}. A similar framework has been used in recent work on graph spanners; for example, \cite{Bodwin25, bodwin2024fault} use this framework over related (but more specific) types of paths. 

For the rest of this section we assume we are given an $n$-node $s$-parallel greedy graph $G = (V,E)$ with $m$ edges whose arboricity we aim to bound.
Likewise, we let $(M_1, \dots, M_k)$ be an ordered sequence of matchings that partition the edge set $E$, witnessing $G$ is an $s$-parallel-greedy graph. Also, for the rest of this section, we refer to a path with exactly $s/2$ edges as an $\frac{s}{2}$-path and for simplicity of presentation we assume that $s$ is even; in the case where $s$ is odd, the same proof works with respect to $\frac{s+1}{2}$-paths (leading to the slightly-improved bound of $O(s \cdot n^{2/(s+1)})$ mentioned previously).

The following formalizes the sense of monotonic paths we use.
\begin{definition} [Monotonic Paths]
	A path $P$ in $G$ is monotonic if the edges in $P$ occur in exactly the
	same order as the matchings that contain these edges.
	In other words, let $(e_1, e_2, \dots, e_x)$ be the edge sequence of $P$, and let $M_{i_j}$ be the matching that contains edge $e_j$ for each $1\le j\le x$.
	Then we say that $P$ is monotonic if we have $i_1 < i_2 < \dots < i_x$.
\end{definition}
\noindent The rest of this section proves \Cref{thm:newPGArb} by counting the number of monotonic $\frac{s}{2}$-paths.

\subsection{Dispersion Lemma}
Our dispersion lemma shows that monotonic $\frac{s}{2}$-paths must be ``dispersed'' around the graph, rather than be  concentrated on one pair of endpoints. This lemma will use a slightly different characterization of $s$-parallel-greedy graphs as below.
\begin{lemma}\label{lem:PGAlt}
    For any cycle $C$ of $s$-parallel-greedy graph $G$ with $|C| \le s+1$ edges, if $M_i$ is the highest-indexed matching that contains an edge of $C$, then there are least two edges from $M_i$ in $C$.
\end{lemma}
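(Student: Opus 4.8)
\textbf{Proof plan for \Cref{lem:PGAlt}.}
The plan is to argue by contradiction: suppose $C$ is a cycle with $|C| \le s+1$ edges such that the highest-indexed matching $M_i$ meeting $C$ contributes exactly one edge $\{u,v\}$ to $C$. Since $M_i$ is the highest-indexed matching meeting $C$, every other edge of $C$ lies in some matching $M_j$ with $j < i$, and hence every other edge of $C$ is present in the graph $G_{i-1} = (V, \bigcup_{j<i} M_j)$. Removing the single edge $\{u,v\}$ from $C$ leaves a path in $C$ from $u$ to $v$ consisting of $|C| - 1 \le s$ edges, all of which belong to $G_{i-1}$. Therefore $d_{G_{i-1}}(u,v) \le |C| - 1 \le s$.

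On the other hand, $\{u,v\} \in M_i$, so the ``Far Endpoints'' property of \Cref{dfn:PGGraph} gives $d_{G_{i-1}}(u,v) > s$. This contradicts the distance bound from the previous paragraph, so no such cycle can exist; every cycle with at most $s+1$ edges whose top matching is $M_i$ must contain at least two edges of $M_i$.

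I do not expect any serious obstacle here — the argument is essentially a direct unpacking of the two defining properties of an $s$-parallel-greedy graph. The one small point to be careful about is that deleting one edge from a cycle genuinely yields a simple $u$--$v$ path (rather than something degenerate), which is immediate since $C$ is a cycle, and that this path indeed lies entirely in $G_{i-1}$, which follows because $M_i$ was assumed to be the unique highest-indexed matching meeting $C$ and, under the contradiction hypothesis, it meets $C$ in only the edge $\{u,v\}$ that we deleted. Thus the ``hard part'' is really just bookkeeping: confirming that the $\le s+1$ edge budget on $C$ translates to a $\le s$ hop bound on the remaining path, which is exactly what makes the ``Far Endpoints'' inequality $d_{G_{i-1}}(u,v) > s$ bite.
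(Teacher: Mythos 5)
Your proposal is correct and matches the paper's proof essentially verbatim: delete the unique $M_i$-edge $\{u,v\}$, note the remaining at most $s$ edges of $C$ lie in $G_{i-1}$ and give $d_{G_{i-1}}(u,v)\le s$, and contradict the Far Endpoints property $d_{G_{i-1}}(u,v)>s$. No issues.
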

\begin{proof}
Suppose for the sake of contradiction that $C$ only contained one edge $\{u,v\}$ from $M_i$. Then, $G_{i-1}:= (V,\bigcup_{j < i} M_i)$ contains every edge other than $\{u,v\}$ of $C$ of which there are at most $s$ so 
\begin{align}\label{eq:sagsa}
    d_{G_{i-1}}(u,v) \leq s.
\end{align}
 But, $\{u, v\} \in M_i$ and $G$ is $s$-parallel-greedy so $d_{G_{i-1}}(u,v) > s$ which contradicts \Cref{eq:sagsa}.
\end{proof}
\noindent See \Cref{sfig:pg3} for an illustration of this on a $12$-parallel-greedy graph; in this graph, there are many cycles with at most $13$ edges but each such cycle has at least two edge from its highest-indexed incident matching.

The following is our dispersion lemma.
\begin{lemma} [Dispersion Lemma] \label{lem:disp}
For $u, v \in V$, there is at most one monotonic $\frac{s}{2}$-path from $u$ to $v$ in $G$.
\end{lemma}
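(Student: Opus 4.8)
The plan is to argue by contradiction: suppose $P$ and $Q$ are two distinct monotonic $\frac{s}{2}$-paths from $u$ to $v$, and derive a short cycle that violates the alternative characterization of $s$-parallel-greedy graphs in \Cref{lem:PGAlt}. Since $P$ and $Q$ each have $s/2$ edges, their union (viewed as a multiset of edges) has at most $s$ edges; after deleting any shared edges and trivial back-and-forth, the symmetric difference $P \triangle Q$ decomposes into a collection of cycles, each using at most $s$ edges, hence of length at most $s \le s+1$. Because $P \ne Q$, at least one such cycle $C$ is nonempty. The key point will be to locate the highest-indexed matching $M_i$ meeting $C$ and show it contributes only one edge, contradicting \Cref{lem:PGAlt}.

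The heart of the argument is the \emph{monotonicity} constraint. Let $M_i$ be the highest-indexed matching containing an edge of $C$. That edge lies in exactly one of $P$ or $Q$ — say $e = \{a,b\} \in P$, where $e \in M_i$ — since if both $P$ and $Q$ contained an $M_i$-edge incident to the cycle, I would want to show they must be the \emph{same} edge and hence not in the symmetric difference. Here is where monotonicity bites: along a monotonic path the matching indices strictly increase, so each monotonic $\frac{s}{2}$-path contains \emph{at most one} edge from any given matching $M_i$. Now I would examine the vertex $a$ (an endpoint of the $M_i$-edge of $P$) where $C$ passes through: the cycle enters and leaves $a$ along two edges, one from $P$ (namely $e$, with matching index $i$) and one from $Q$. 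If the $Q$-edge at $a$ also had matching index $i$, it would be the unique $M_i$-edge at $a$, hence equal to $e$ — contradicting that $e$ is in the symmetric difference. So the $Q$-edge at $a$ has index $< i$ (it cannot exceed $i$ by maximality of $i$ on $C$). The same holds at $b$. Thus $e$ is the only $M_i$-edge of $C$, contradicting \Cref{lem:PGAlt}.

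The step I expect to be the main obstacle is making the ``$P \triangle Q$ decomposes into short cycles, one of which exposes a unique top-index edge'' argument fully rigorous when $P$ and $Q$ share vertices or edges in complicated ways — in particular ensuring the extracted cycle $C$ genuinely has at most $s+1$ edges and that the endpoint-degree bookkeeping (each vertex of $C$ meets exactly two edges of $C$, one ``from $P$'' and one ``from $Q$'' in the symmetric-difference sense) is valid. One clean way to sidestep subtleties: take $C$ to be a minimal-length cycle contained in $P \cup Q$ but not contained in $P$ and not contained in $Q$; minimality forces $C$ to alternate in a controlled way and keeps $|C| \le s$. Once the cycle is in hand, the monotonicity-plus-matching argument above and \Cref{lem:PGAlt} finish the proof.
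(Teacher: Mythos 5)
Your overall strategy matches the paper's: contradict \Cref{lem:PGAlt} by extracting from the two paths a cycle with at most $s$ edges on which the highest-indexed matching contributes only one edge. But there is a genuine gap in the key step. Working with an arbitrary cycle $C$ of the symmetric difference $P\triangle Q$, your argument that $M_i$ contributes only one edge is purely local: you only rule out a second $M_i$-edge that is \emph{adjacent} to $e=\{a,b\}$ on the cycle (and even there, the claim that the other cycle edge at $a$ comes from $Q$ is unjustified, though harmless, since sharing the endpoint $a$ already forces a different matching). What you never exclude is a second $M_i$-edge lying \emph{elsewhere} on $C$, e.g.\ one $M_i$-edge contributed by $P$ and a different, vertex-disjoint $M_i$-edge contributed by $Q$ — disjointness is exactly what being a matching permits, so no local contradiction arises, and then \Cref{lem:PGAlt} is not violated at all. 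The difficulty is structural: a cycle of $P\triangle Q$ may alternate between several $P$-arcs and several $Q$-arcs, and monotonicity of each arc separately does not force a unique top edge on the cycle. Your proposed fallback (a minimum-length cycle in $P\cup Q$ ``not contained in $P$ or $Q$'') does not repair this: every cycle in $P\cup Q$ trivially satisfies that condition, and minimality by itself does not yield the needed two-arc structure.

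The paper closes exactly this hole by choosing the cycle more carefully: since $P_a\neq P_b$ share both endpoints, there are contiguous, internally disjoint subpaths $Q_a\subseteq P_a$ and $Q_b\subseteq P_b$ (between a divergence vertex and the next reconvergence vertex) whose union is a cycle $C$ with $|C|\le s$. Now let $e_a^*,e_b^*$ be the last edges of $Q_a,Q_b$; they share the reconvergence vertex, hence lie in different matchings, and if (WLOG) $e_a^*$ has the higher index, monotonicity of $Q_a$ and of $Q_b$ toward that junction makes $e_a^*$ strictly dominate every other edge of $C$, so the top matching contributes exactly one edge, contradicting \Cref{lem:PGAlt}. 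If you restructure your proof around this two-arc cycle (rather than an arbitrary component of the symmetric difference), your monotonicity argument goes through; as written, the step you yourself flagged as the main obstacle is precisely where the proof fails.
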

\begin{proof}
Suppose for contradiction that there are two distinct $\frac{s}{2}$-paths from $u$ to $v$ , $P_a$ and $P_b$; see \Cref{sfig:dispLem1}.
Then there exist contiguous subpaths $Q_a \subseteq P_a, Q_b \subseteq P_b$ such that $Q_a \cup Q_b$ forms a cycle $C$.
Note that the number of edges in $C$ satisfies
$$|C| \le |Q_a| + |Q_b| \le |P_a| + |P_b| = s,$$
and so by \Cref{lem:PGAlt}, we know that the highest-indexed matching containing an edge of $C$ must contain at least $2$ edges of $C$. We proceed to contradict this.

Let $e_a^*, e_b^*$ be the last edges of $Q_a, Q_b$ respectively; see \Cref{sfig:dispLem1}.
These edges share an endpoint (since they are adjacent in C), and therefore they belong to different matchings.
We will assume without loss of generality that $e_a^*$ is in a higher-indexed matching than $e_b^*$.
Since $Q_a, Q_b$ are monotonic paths, this implies that \emph{every} edge in $Q_b$ is in a lower-indexed matching than the one containing $e_a^*$, and also every edge in $Q_a$ (except $e_a^*$ itself) comes from a lower-indexed matching than the one containing $e_a^*$; see \Cref{sfig:dispLem3}.
Thus the highest-indexed matching to contribute an edge to $C$ only contributes one edge, $e_a^*$, a contradiction.
\end{proof}
\noindent Notice that it follows by the dispersion lemma that there are $O(n^2)$ monotonic $\frac{s}{2}$-paths in $G$.

\begin{figure}
	\centering
	\begin{subfigure}[b]{0.32\textwidth}
		\centering
		\includegraphics[width=\textwidth,trim=0mm 50mm 0mm 50mm, clip]{./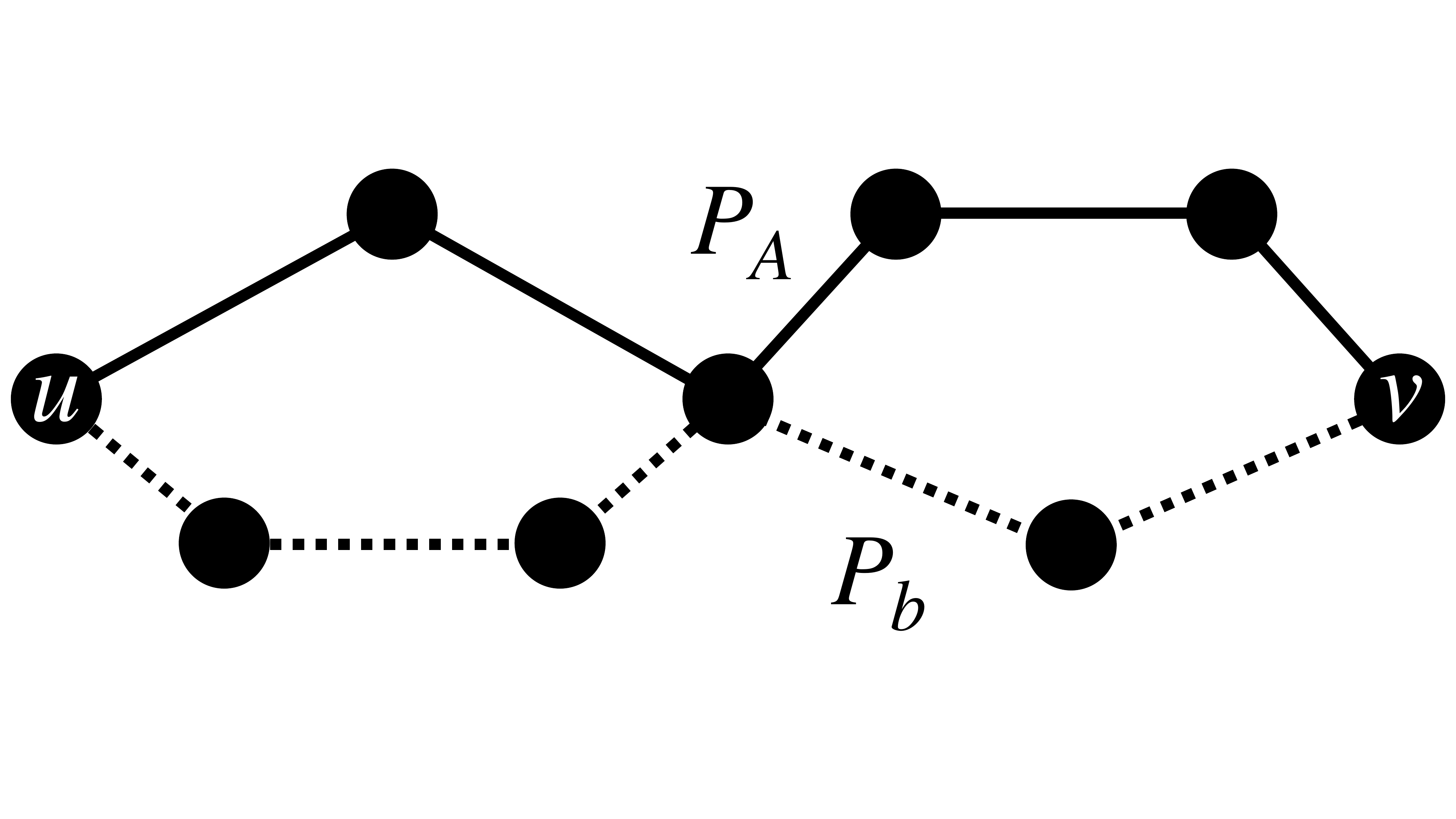}
		\caption{$P_a$, $P_b$.}\label{sfig:dispLem1}
	\end{subfigure}    \hfill
	\begin{subfigure}[b]{0.32\textwidth}
		\centering
		\includegraphics[width=\textwidth,trim=0mm 50mm 0mm 50mm, clip]{./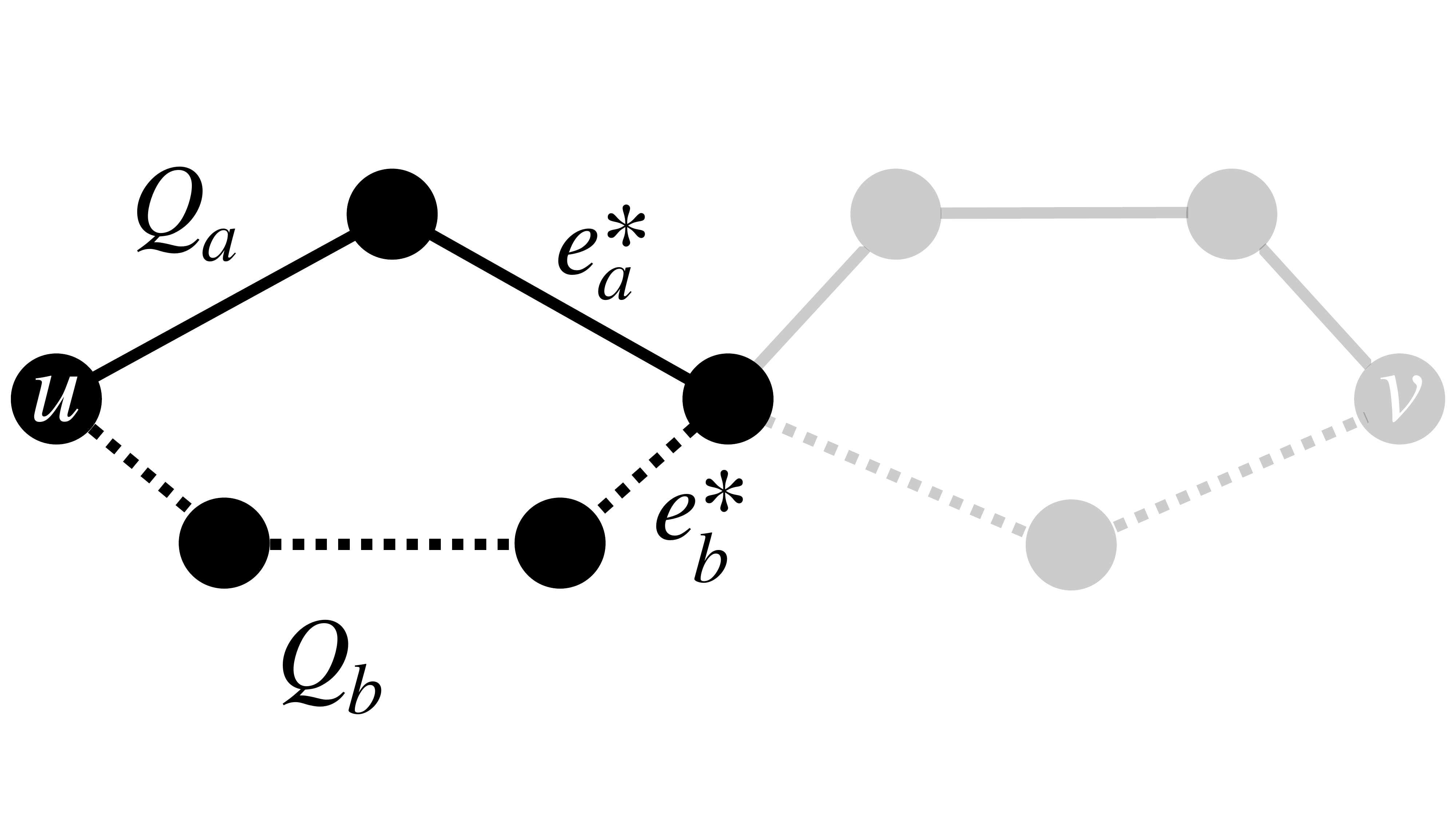}
		\caption{$Q_a$, $Q_b$.}\label{sfig:dispLem2}
	\end{subfigure} \hfill
	\begin{subfigure}[b]{0.32\textwidth}
	\centering
	\includegraphics[width=\textwidth,trim=0mm 50mm 0mm 50mm, clip]{./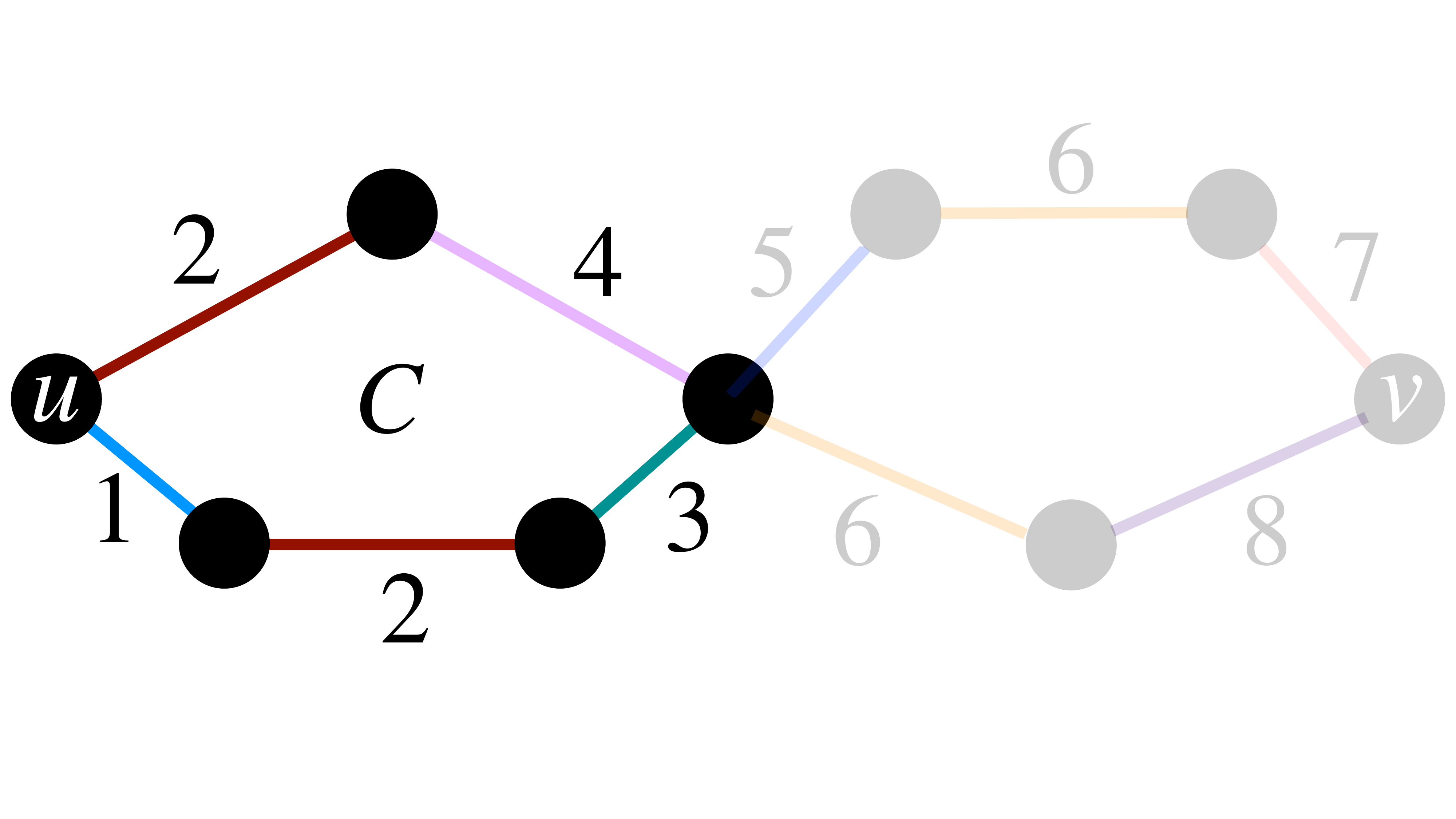}
	\caption{$C$ and matching indices.}\label{sfig:dispLem3}
	\end{subfigure}
	\caption{The proof of our dispersion lemma (\Cref{lem:disp}) for $s=10$. \ref{sfig:dispLem1} gives paths $P_a$ (solid) and $P_b$ (dashed) from $u$ to $v$. \ref{sfig:dispLem2} gives $Q_a$, $Q_b$, $e_a^*$ and $e_b^*$. \ref{sfig:dispLem3} labels each edge with the index of its matching and cycle $C$ which contradicts \Cref{lem:PGAlt}.}\label{fig:dispLem}
\end{figure}

\subsection{Counting Lemma}
Next we will prove our counting lemma, giving a lower bound on the number of monotonic $\frac{s}{2}$-paths in $G$ in terms of the average degree. We bootstrap up to our ``full'' counting lemma using a ``weak'' and ``medium'' counting'' lemma. Ultimately, however, we will only need the full counting lemma.

Our weak counting lemma follows from an adaptation of the folklore \emph{hiker lemma} in graph theory; see e.g.~\cite{Bodwin25, bodwin2024fault} for discussion and similar usage.

\begin{lemma} [Weak Counting Lemma]\label{lem:weakCount}
If $m \ge sn/4$, then $G$ contains a monotonic $\frac{s}{2}$-path.
\end{lemma}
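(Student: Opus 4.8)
The plan is to prove the contrapositive via the \emph{hiker lemma} / potential-argument framework: assuming $m \ge sn/4$, I want to exhibit a single monotonic $\frac{s}{2}$-path. The idea is to build such a path greedily, processing the matchings $M_1, M_2, \dots, M_k$ in order and maintaining, for each vertex $v$, the length of the longest monotonic path ending at $v$ among the edges seen so far. Formally, I would maintain a function $\ell: V \to \mathbb{Z}_{\ge 0}$ initialized to $0$. When matching $M_i$ is processed, for each edge $\{u,v\} \in M_i$ we update $\ell(u)$ and $\ell(v)$ to $1 + \max\{\ell(u), \ell(v)\}$ (the ``hiker'' move: whichever endpoint had the longer monotonic path so far, we extend that path across $\{u,v\}$ to reach the other endpoint, and symmetrically). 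Since $M_i$ is a matching, these updates are on disjoint pairs and can be performed simultaneously without ambiguity. Crucially, because the edges of $M_i$ all belong to the same matching (which is later-indexed than everything processed before), any monotonic path recorded before step $i$ remains monotonic after appending one edge of $M_i$, so the invariant ``$\ell(v)$ is the number of edges on some monotonic path ending at $v$'' is maintained.

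The key step is to track the potential $\Phi = \sum_{v} \ell(v)$ and show it increases by at least $2$ each time we process a nonempty matching—or more precisely, to track the total increase over all of $E$. Each edge $\{u,v\} \in M_i$, when processed, raises $\ell(u) + \ell(v)$ from its old value to $2(1 + \max\{\ell(u),\ell(v)\}) \ge (\ell(u)+\ell(v)) + 2$, hence contributes at least $2$ to the total increase of $\Phi$. Summing over all $m$ edges across all matchings, the total increase in $\Phi$ is at least $2m$. On the other hand, if no monotonic $\frac{s}{2}$-path ever exists, then $\ell(v) \le \frac{s}{2} - 1 < s/2$ for every $v$ at all times, so $\Phi \le n \cdot (s/2 - 1) < sn/2$ at the end, and since $\Phi$ starts at $0$ its total increase is less than $sn/2$. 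Combining, $2m < sn/2$, i.e.\ $m < sn/4$, contradicting the hypothesis $m \ge sn/4$. Hence a monotonic $\frac{s}{2}$-path must appear.

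The main obstacle I anticipate is bookkeeping the ``monotonicity is preserved'' claim carefully: one must confirm that when two edges of the \emph{same} matching $M_i$ meet at a vertex, we never try to append both, and that the recorded path ending at $v$ after step $i$ genuinely uses matchings with strictly increasing indices (the last edge from $M_i$, the previous edges from matchings indexed $< i$). Because $M_i$ is a matching, at most one edge of $M_i$ is incident to any given $v$, so this is clean, but it is the point that needs a sentence of justification. A secondary subtlety is the precise threshold: using $\ell(v) \le s/2 - 1$ (no $\frac{s}{2}$-path) gives $\Phi < (s/2-1)n \le sn/2 - n$, which even beats the needed bound, so the constant $sn/4$ is comfortably safe and there is a little slack; I would just state the clean inequality rather than optimize. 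With these points addressed the argument is short.
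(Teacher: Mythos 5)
Your overall strategy is the same as the paper's: the folklore hiker/potential argument, with total traversal $2m \ge sn/2$ forcing some monotonic walk of $\ge s/2$ edges. But the update rule you actually wrote does not support the invariant you rely on. Setting \emph{both} $\ell(u)$ and $\ell(v)$ to $1+\max\{\ell(u),\ell(v)\}$ breaks the claim ``$\ell(v)$ is the number of edges on some monotonic path ending at $v$'': for the endpoint that already held the maximum, the only new monotonic path you can exhibit ending there is the \emph{other} endpoint's path extended across the edge, of length $1+\min\{\ell(u),\ell(v)\}$, so the correct updated value at that endpoint is $\max\{\ell(u),\ell(v)+1\}$, not $1+\max\{\ell(u),\ell(v)\}$ (recall that, $M_i$ being a matching and the path monotonic, only the single $M_i$-edge at that vertex can be its last edge). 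Concretely, take a star with center $c$ and leaves $x_1,\dots,x_k$ where $\{c,x_i\}\in M_i$; this is $s$-parallel-greedy for every $s$, yet your rule drives $\ell(c)$ up to $k$ while the longest monotonic path ending at $c$ has one edge and the longest in the whole graph has two. Hence the crucial step ``no monotonic $\frac{s}{2}$-path implies $\ell(v)\le \frac{s}{2}-1$ at all times, so $\Phi<sn/2$'' is unjustified for your $\ell$ --- and that is exactly the step where the hypothesis $m\ge sn/4$ has to do its work, so as written the proof has a genuine hole.

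The repair is the rule your parenthetical actually describes, i.e., the true hiker move: simultaneously set $\ell(u)\leftarrow \ell(v)+1$ and $\ell(v)\leftarrow \ell(u)+1$ (the two hikers swap, each carrying its own count). Then $\ell(v)$ really is the length of the monotonic walk of the hiker currently standing at $v$ (monotone because a hiker crosses at most one edge per matching, in increasing order of index), $\Phi$ increases by exactly $2$ per edge, so $\Phi$ ends at $2m\ge sn/2$, and some hiker's walk has at least $s/2$ edges, from which one extracts a monotonic $\frac{s}{2}$-path exactly as the paper does. With that one-line correction your argument coincides with the paper's proof of \Cref{lem:weakCount}.
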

\begin{proof}

Start by placing a ``hiker'' at each node of $G$.
Then, consider the matchings $M_i$ in increasing order of index.
Upon matching $M_i$, we ask the hikers who currently stand at the endpoints of each edge $(x, y) \in M_i$ to walk across that edge, switching places with each other.
Once all matchings have been considered, our $n$ hikers have traversed $2m \ge sn/2$ edges in total.
Additionally, each individual hiker has walked a monotonic path.
Thus, there exists a hiker who walked a monotonic path of $\ge s/2$ edges (and so any subpath of exactly $s/2$ edges satisfies the lemma).
\end{proof}
\noindent We next give the medium version of our counting lemma using our weak lemma.
\begin{lemma} [Medium Counting Lemma]\label{lem:medCount}
If $m \ge sn/2$, then $G$ has at least $\Omega(n)$ monotonic $\frac{s}{2}$-paths.
\end{lemma}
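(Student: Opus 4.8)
My plan is to bootstrap from the Weak Counting Lemma (\Cref{lem:weakCount}) by a deletion argument: repeatedly extract monotonic $\frac{s}{2}$-paths while maintaining enough edges that the Weak Counting Lemma still applies. Concretely, starting from $G$ with $m \ge sn/2$ edges, I would find one monotonic $\frac{s}{2}$-path $P$ (which exists since $m \ge sn/2 \ge sn/4$), record it, and then delete from the graph the edge of $P$ lying in the highest-indexed matching among $P$'s edges — call it the ``top edge'' of $P$. Deleting one edge leaves an $s$-parallel-greedy graph (deleting edges clearly preserves the parallel-greedy property, since distances only increase). I would repeat this as long as the current edge count is at least $sn/4$. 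Since we start at $sn/2$ and delete one edge per step, we can perform at least $sn/4 = \Omega(n)$ steps, so we produce $\Omega(n)$ paths.

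The key point making these paths distinct is the choice of which edge to delete. Each path $P_j$ produced at step $j$ uses its top edge $f_j$, and $f_j$ is present in the graph at step $j$ but is deleted immediately afterward, so $f_j \notin P_{j'}$ for any $j' > j$; hence the top edges $f_1, f_2, \dots$ are pairwise distinct, which forces the paths $P_1, P_2, \dots$ to be pairwise distinct. (One must check $P_j$ actually contains $f_j$: it does by definition, $f_j$ being the highest-indexed-matching edge of $P_j$, which exists since $P_j$ has $s/2 \ge 1$ edges.) Counting, after $t$ deletions the edge count is $m - t$, and we can continue as long as $m - t \ge sn/4$, i.e.\ for $t$ up to $m - sn/4 \ge sn/2 - sn/4 = sn/4$; so we obtain at least $sn/4 = \Omega(n)$ distinct monotonic $\frac{s}{2}$-paths, as claimed.

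The main thing to get right — and the only real subtlety — is the bookkeeping that guarantees distinctness of the paths across iterations while using only the (weak) existence statement at each step; the ``delete the top edge'' trick is exactly what couples these two, since it both preserves the $s$-parallel-greedy structure and certifies that no later path can reuse an earlier path's top edge. Everything else (the parallel-greedy property being inherited by edge-subgraphs, the arithmetic $sn/2 - sn/4 = sn/4$) is routine.
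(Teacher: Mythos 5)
Your proof is correct and essentially matches the paper's: the paper phrases the same deletion idea contrapositively (if there were fewer than $sn/4$ monotonic $\frac{s}{2}$-paths, deleting one edge from each would eliminate them all while leaving at least $sn/4$ edges, contradicting \Cref{lem:weakCount}), whereas you extract paths iteratively, which is the same argument. One minor simplification: the ``top edge'' selection is superfluous, since deleting \emph{any} edge of the extracted path already preserves the $s$-parallel-greedy property and certifies that all later paths differ from it.
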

\begin{proof}
By the weak counting lemma (\Cref{lem:weakCount}), we would need to delete at least $sn/4 \ge \Omega(n)$ edges in order to make $G$ have no monotonic $\frac{s}{2}$-paths.
This implies that $G$ has at least $\Omega(n)$ monotonic $\frac{s}{2}$-paths, since otherwise we could delete one edge from each path to eliminate them all.
\end{proof}
\noindent Lastly, we prove our full counting lemma by applying our medium counting lemma.
\begin{lemma} [Full Counting Lemma] \label{lem:fullcount}
Let $d$ be the average degree in $G$.
If $d \ge s$, then $G$ contains at least $n \cdot \Omega(d/s)^{s/2}$ monotonic $\frac{s}{2}$-paths.
\end{lemma}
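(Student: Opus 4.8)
The plan is to go from the Medium Counting Lemma (which guarantees $\Omega(n)$ monotonic $\frac{s}{2}$-paths whenever $m \ge sn/2$) to a bound that exploits a large average degree $d$ by a "subsampling + recursion" argument. The idea is standard in the dispersion/counting spanner literature: if the average degree is $d \gg s$, we can randomly retain each edge of $G$ with probability roughly $p = s/(2d)$ (or better, $\Theta(s/d)$), obtaining a subgraph $G'$ whose expected average degree is $\Theta(s)$, so in expectation $G'$ still satisfies the hypothesis $m' \ge sn'/2$ of the Medium Counting Lemma (after a small amount of care with concentration / Markov to fix the vertex count and edge count simultaneously). Since any subgraph of an $s$-parallel-greedy graph is $s$-parallel-greedy (restricting each matching $M_i$ to the retained edges keeps the far-endpoints property, as distances only increase), $G'$ is again $s$-parallel-greedy and the Medium Counting Lemma applies to give $\Omega(n)$ monotonic $\frac{s}{2}$-paths in $G'$.

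Next I would relate the count in $G'$ back to $G$ in expectation. Every monotonic $\frac{s}{2}$-path of $G'$ is in particular a monotonic $\frac{s}{2}$-path of $G$ (monotonicity is inherited since the matching order is inherited), and a fixed monotonic $\frac{s}{2}$-path of $G$ survives into $G'$ with probability exactly $p^{s/2}$ because it has $s/2$ edges each retained independently. Hence if $N$ is the number of monotonic $\frac{s}{2}$-paths in $G$, then $\mathbb{E}[\#\text{paths in } G'] = N \cdot p^{s/2}$. Combining with the lower bound $\Omega(n)$ on the (expected, via a suitable choice of the random instance that achieves the expectation) number of paths in $G'$ gives $N \ge \Omega(n) / p^{s/2} = n \cdot \Omega(1/p)^{s/2} = n \cdot \Omega(d/s)^{s/2}$, which is exactly the claimed bound.

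The main obstacle I anticipate is making the probabilistic step fully rigorous rather than heuristic: we need a single outcome of the random retention in which simultaneously (i) $G'$ has enough edges relative to its vertex count to trigger the Medium Counting Lemma, and (ii) the number of surviving monotonic $\frac{s}{2}$-paths is not much more than its expectation $N p^{s/2}$ — and these two events must be shown to overlap. A clean way to handle this is to not delete isolated vertices at all (so $n' = n$ is fixed), choose $p$ a constant factor larger than $s/(2d)$ so that $\mathbb{E}[m'] \ge sn$, apply Markov to the random variables $(\text{number of surviving paths})$ and $(sn - m')^+$ to show each is "small" with probability $> 1/2$, and conclude a good outcome exists by a union bound. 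One must also double-check the edge cases: if $G'$ ends up with $m' < sn'/2$ on the chosen outcome we instead pass to a densest subgraph, and the hypothesis $d \ge s$ ensures $p \le 1/2 < 1$ so the sampling is well-defined. Once the sampling is set up carefully, the rest is the routine expectation computation sketched above.
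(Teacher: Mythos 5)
Your approach is in essence the paper's: randomly sparsify $G$ down to density $\Theta(s)$, apply the Medium Counting Lemma to the sample, and compare against the expected number of surviving monotonic $\frac{s}{2}$-paths. The difference is the sampling model, and it matters for cleanliness: the paper takes $G'$ to be a uniformly random edge-subgraph with \emph{exactly} $sn/2$ edges, so the Medium Counting Lemma applies deterministically to every outcome, and one simply writes $\Omega(n) \le \mathbb{E}[x'] \le N \cdot \left(\frac{sn}{2m}\right)^{s/2} = N \cdot O(s/d)^{s/2}$ -- no concentration, no Markov, no union bound, and no need to find a single ``good'' outcome, which is exactly the step you identify as the main obstacle. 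Also note a quantitative slip in your proposed fix: with only $d \ge s$ you cannot simultaneously have $\mathbb{E}[m'] = pm \ge sn$ and $p \le 1/2$ (indeed $p$ would have to be $2s/d$, which exceeds $1$ when $d < 2s$), so as written the parameters are inconsistent in the regime $d = O(s)$; this is fixable either by handling that regime separately (there the Medium Counting Lemma alone gives the claim once the constant inside $\Omega(d/s)^{s/2}$ is chosen small) or by switching to the paper's exact-size sampling, after which your expectation computation goes through and recovers the stated bound.
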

\begin{proof}
Let $G'$ be a uniform random edge-subgraph of $G$ on exactly $sn/2$ edges.
Let $x$ be the number of monotonic $\frac{s}{2}$-paths in $G$, and let $x'$ be the number of monotonic $\frac{s}{2}$-paths that survive in $G'$.
On one hand, by the medium counting lemma (\Cref{lem:medCount}), we have $x' \ge \Omega(n)$ (deterministically).
On the other hand, for any monotonic $\frac{s}{2}$-path $P$ in $G$, the probability that $P$ survives in $G'$ is
\begin{align*}
&\underbrace{\frac{sn/2}{m}}_{\text{probability first edge is selected in $G'$}} \cdot \underbrace{\frac{sn/2-1}{m-1}}_{\substack{\text{probability second edge is selected in $G'$,}\\\text{given first edge is selected in $G'$}}} \cdot \ldots \cdot  \underbrace{\frac{sn/2-(s/2-1)}{m-(s/2-1)}}_{\substack{\text{probability $s/2^{th}$ edge is selected in $G'$,}\\\text{given first $s/2-1$ edges are selected in $G'$}}}\\
\end{align*}
which is
\begin{align*}
\le \ & \left(\frac{sn}{2m}\right)^{s/2}\\
= \ & O\left( \frac{s}{d} \right)^{s/2}.
\end{align*}
Thus we have
$$\Omega(n) \le \mathbb{E}[x'] \le x \cdot O\left( \frac{s}{d} \right)^{s/2}.$$
Rearranging, we get
$$x \ge n \cdot \Omega\left(\frac{d}{s}\right)^{s/2},$$
as claimed.
\end{proof}

\subsection{Completing Our Arboricity Bound}

We now complete our bound on the arboricity of $s$-parallel-greedy graphs by combining our dispersion lemma and full counting lemma.
\PGArb*
\begin{proof}
First, we claim that any $n$-node $s$-parallel greedy graph $G$ has average degree at most $O(s \cdot n^{2/s})$. Let $d$ be the average degree of $G$. By \Cref{lem:disp}, there are $O(n^2)$ monotonic $\frac{s}{2}$-paths in $G$.
By \Cref{lem:fullcount}, there are $n \cdot \Omega(d/s)^{s/2}$ monotonic $\frac{s}{2}$-paths in $G$.
Comparing these estimates, we have
\begin{align*}
    n \cdot \Omega\left(\frac{d}{s}\right)^{s/2} \le O(n^2).
\end{align*}
Rearranging this inequality gives
$$d \le O(s \cdot n^{2/s}),$$
giving our claimed bound on the average degree of $G$.

To bound the arboricity of $G$, observe that any subgraph of an $s$-parallel greedy graph is itself an $s$-parallel greedy graph. Combining this with our bound on the average degree of an $s$-parallel greedy graph, we get that for any $U \subseteq V$ we have 
\begin{align*}
    |E(U)| \leq O(s \cdot |U|^{2/s}) \cdot \left(|U|-1 \right) \leq O(s \cdot n^{2/s})\cdot \left(|U|-1 \right).
\end{align*}
Applying \Cref{lem:NW}, we get that the arboricity of $G$ is at most $O(s \cdot n^{2/s})$ as required.
\end{proof}

\section{The Union of Sparse Length-Constrained Cuts is Sparse}

We next give our new bound on the sparsity of the union of length-constrained cuts, stated in full generality below. Note that the below does not explicitly require that each $C_i$ is an $(h,s)$-length $\phi$-sparse cut.

\begin{restatable}{theorem}{uOfCuts}\label{thm:UofCuts}
Fix $h \geq 1$, $s \geq 2$ and $\phi > 0$ and let $A$ be a node-weighting and $(C_1, C_2,\ldots )$ be length-constrained cuts in a graph with edge lengths and capacities. Then $\left(1+\frac{1}{s-1}\right)\sum_i C_i$ is an $(h',s')$-length $\phi'$-sparse cut with respect to $A$ with
    \begin{align*}
        h' = 2h \text{\qquad and \qquad} s' = \frac{(s-1)}{2} \text{\qquad and \qquad} \phi' = O\left(s \cdot 
        |A|^{2/s} \cdot \frac{\sum_i |C_i|}{\sum_i A_{(h,s)}(C_i)} \right).
    \end{align*}
    where $A_{(h,s)}(C_i)$ is the demand-size of $C_i$ in $G - \left(\sum_{j < i} C_j\right)_{hs}$ .
\end{restatable}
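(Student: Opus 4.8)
The plan is to deduce \Cref{thm:UofCuts} from our new parallel-greedy arboricity bound (\Cref{thm:newPGArb}) by following the reduction of \cite{haeupler2024new}. Write $C := \left(1 + \frac{1}{s-1}\right)\sum_i C_i$, so that $|C| = \frac{s}{s-1}\sum_i |C_i|$; since $\spa_{(h',s')}(C,A) = |C|/A_{(h',s')}(C)$, it suffices to produce an $A$-respecting $h'$-length demand that is $h's'$-separated by $C$ and has size $\Omega\!\left(\sum_i A_{(h,s)}(C_i)\,/\,(s\,|A|^{2/s})\right)$; dividing then yields the claimed $\phi'$.

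For each $i$ let $D_i$ be the witness demand of $C_i$ in $G_i := G - \left(\sum_{j<i}C_j\right)_{hs}$, so $|D_i| = A_{(h,s)}(C_i)$, $D_i$ is $A$-respecting, every pair $(u,v)\in\supp(D_i)$ satisfies $d_{G_i}(u,v)\le h$ (hence $d_G(u,v)\le h\le h'$), and every such pair has $d_{\ell + hs\sum_{j\le i}C_j}(u,v) > hs$. Two facts drive the argument. First, every single pair of $\supp(D_i)$ is already $h's'$-separated by $C$: adding the later cuts only increases distances, so $d_{\ell + hs\sum_j C_j}(u,v) > hs$, and since $hs\sum_j C_j = h(s-1)\,C = h's'\cdot C$ while $h's' = h(s-1) < hs$, we get $d_{G - C_{h's'}}(u,v) > hs > h's'$. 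Second, realize each $D_i$ as a matching $M_i'$ on a ground set $\mathcal{V}$ of $|A|$ ``copies'' of the vertices (possible since $D_i$ is $A$-respecting and $A$ is integral), and list these matchings in order of \emph{decreasing} cut index; I claim the result is an $s$-parallel-greedy graph on $\mathcal{V}$. Indeed, if the endpoints of an edge of $M_i'$ were at hop-distance $t \le s$ in $\bigcup_{j > i}M_j'$, then, using that every one of those $t$ demand pairs has cut index exceeding $i$ and is therefore realized in $G$ by a path of $(\ell + hs\sum_{l\le i}C_l)$-length at most $h$, concatenation yields a $u$--$v$ walk of $(\ell + hs\sum_{l\le i}C_l)$-length at most $th \le hs$, contradicting that this pair is $hs$-separated by $C_i$ in $G_i$.

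Now apply \Cref{thm:newPGArb} to $\mathcal{G} := \bigcup_i M_i'$: it is $s$-parallel-greedy on $|A|$ vertices, so its arboricity is $\alpha = O(s\,|A|^{2/s})$, while $|E(\mathcal{G})| = \sum_i |M_i'| = \sum_i A_{(h,s)}(C_i)$. Following \cite{haeupler2024new}, this arboricity bound is exactly what lets one extract from the $M_i'$ a single $A$-respecting demand $D$ of size $\Omega(|E(\mathcal{G})|/\alpha)$ that is $h'$-length in $G$ and $h's'$-separated by $C$; the passage to $h' = 2h$ and $s' = (s-1)/2$ is the price of this extraction, which must reroute separated pairs through shared endpoints (roughly doubling relevant distances). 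Feeding the resulting bound $A_{(h',s')}(C) = \Omega\!\left(\sum_i A_{(h,s)}(C_i)/(s|A|^{2/s})\right)$ into $\spa_{(h',s')}(C,A) = |C|/A_{(h',s')}(C)$ and using $|C| = \frac{s}{s-1}\sum_i|C_i|$ gives $\phi' = O\!\left(s\,|A|^{2/s}\cdot \frac{\sum_i |C_i|}{\sum_i A_{(h,s)}(C_i)}\right)$, as claimed.

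The step I expect to be the crux is the extraction in the last paragraph. The difficulty is that the union $\mathcal{G}$ of witness matchings can be far from a matching itself (e.g.\ a star centered at a much-reused vertex), so one cannot simply take a maximum matching inside $\mathcal{G}$; instead one must argue that the \emph{union cut} $C$ $h's'$-separates a large family of pairs not contained in any individual $D_i$, while controlling the length blow-up of the rerouted demand. This is precisely the content of \cite{haeupler2024new}'s framework, which we invoke as a black box; all of the genuinely new input --- the arboricity bound that replaces the previous $\log^3 n \cdot s^3 n^{O(1/s)}$ loss by $s\,|A|^{2/s}$ --- is already established in \Cref{thm:newPGArb}.
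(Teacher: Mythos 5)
Your proposal is correct and follows essentially the same route as the paper: the paper's own proof of \Cref{thm:UofCuts} is exactly the combination of the new arboricity bound (\Cref{thm:newPGArb}) with the union-of-cuts-via-arboricity connection from \cite{haeupler2024new} (\Cref{lem:UCviaPGArb}), and your construction of the matchings $M_i'$ on $|A|$ copies in reverse cut order, with the concatenation/triangle-inequality argument, is the paper's demand-matching-graph lemma (\Cref{lem:arbBound}). The step you invoke as a black box is precisely the dispersion construction behind \Cref{lem:UCviaPGArb} (forest cover plus tree-matching demands, \Cref{dfn:matchingDemand} and \Cref{lem:sparseOfMatching}), which the paper likewise attributes to \cite{haeupler2024new} and reproves only for completeness in \Cref{sec:sparseCutsVsPGArb}.
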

\noindent The earlier-stated simplified version of the above---\Cref{thm:UofCutsSimple}---follows immediately. In particular, if we apply the above fact to an $(h,s)$-length $\phi$-sparse sequence of length-constrained cuts $(C_1, C_2, \ldots)$ in a graph with capacity one on every edge where $A = \deg_G$, then $|A|^{2/s}$ becomes $n^{O(1/s)}$ and for each $i$ we know $|C_i| / A_{(h,s)}(C_i) \leq \phi$ so $\sum_i \frac{|C_i|}{A_{(h,s)}(C_i)} \leq \phi$. It follows that in this case $\phi' =s n^{O(1/s)} \cdot \phi$ as needed for \Cref{thm:UofCutsSimple}.

\Cref{thm:UofCuts}, in turn, follows immediately by our new arboricity bound and a connection established between the arboricity of $s$-parallel-greedy graphs and the sparsity of the union of length-constrained cuts. We summarize this connection below.
\begin{restatable}[\cite{haeupler2024new}]{lemma}{UCviaPGArb}\label{lem:UCviaPGArb}
    Suppose every $n$-node $s$-parallel-greedy graph has arboricity at most $\alpha(n, s)$ and fix $h \geq 1$ and $s \geq 2$. Then if $(C_1, C_2,\ldots )$ is a sequence of length-constrained cuts we have that $\left(1+\frac{1}{s-1}\right)\sum_i C_i$ is an $(h',s')$-length $\phi'$-sparse cut with respect to $A$ with
    \begin{align*}
        h' = 2h \text{\qquad and \qquad} s' = \frac{s-1}{2} \text{\qquad and \qquad} \phi' = 8\alpha\left(|A|,s\right) \cdot \frac{\sum_i |C_i|}{\sum_i A_{(h,s)}(C_i)}.
    \end{align*}
    where $A_{(h,s)}(C_i)$ is the demand-size of $C_i$ in $G - \left(\sum_{j < i} C_j\right)_{hs}$.
\end{restatable}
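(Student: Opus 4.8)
The plan is to produce, from the hypothesized arboricity bound alone, an $A$-respecting $h'$-length demand $D$ that is $h's'$-separated by $C := \left(1+\frac{1}{s-1}\right)\sum_i C_i$ and satisfies $|D| \ge |C|/\phi'$; since $A_{(h',s')}(C) \ge |D|$ and $\spa_{(h',s')}(C,A) = |C|/A_{(h',s')}(C)$ (\Cref{def:demandSize}, \Cref{def:sparsity}), this is exactly the claim. For each $i$ I fix a witness demand $D_i$ for $C_i$ with respect to $A$ \emph{in} $G_i := G - \left(\sum_{j<i}C_j\right)_{hs}$, so $D_i$ is $A$-respecting, is $h$-length in $G_i$, is $hs$-separated by $C_i$ in $G_i$, and has $|D_i| = A_{(h,s)}(C_i)$. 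I will assemble $D$ from the pairs of the $D_i$'s using two moves: (i) keep a pair $\{u,v\}$ of some $D_i$; and (ii) given a pair $\{u,v\}$ of $D_i$ and a pair $\{v,w\}$ of $D_{i'}$ with $i<i'$ that share the vertex $v$, replace them by the single pair $\{u,w\}$. The content of the lemma is to pack enough of these moves to make $|D|$ large while keeping $D$ an $A$-respecting demand, and it is this packing that the arboricity bound controls.

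First I would verify that the parameters are calibrated so that both moves preserve ``$h'$-length'' and ``$h's'$-separated by $C$''. Because $h's'\left(1+\frac{1}{s-1}\right) = h(s-1)\cdot\frac{s}{s-1} = hs$, the graph $G - C_{h's'}$ has length function $\ell + hs\sum_j C_j$, which pointwise dominates that of every $G_i$ and of every $G_i^{+} := G - \left(\sum_{j\le i}C_j\right)_{hs}$. Hence a pair $\{u,v\}$ of $D_i$ satisfies $d_G(u,v) \le d_{G_i}(u,v) \le h \le h'$ and, being $hs$-separated by $C_i$ in $G_i$ (equivalently $d_{G_i^{+}}(u,v) > hs$), satisfies $d_{G-C_{h's'}}(u,v) \ge d_{G_i^{+}}(u,v) > hs > h(s-1) = h's'$; this handles move (i). For move (ii) with $i<i'$: two witness paths give $d_G(u,w) \le d_G(u,v) + d_G(v,w) \le 2h = h'$; and in $G_i^{+}$ we have $d_{G_i^{+}}(u,v) > hs$ while $d_{G_i^{+}}(v,w) \le d_{G_{i'}}(v,w) \le h$, where the first inequality holds precisely because $G_{i'}$ dominates $G_i^{+}$ when $i'>i$. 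The triangle inequality then gives $d_{G_i^{+}}(u,w) > hs - h = h(s-1)$, so $d_{G-C_{h's'}}(u,w) > h's'$ as well.

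Next I would encode the packing problem in an auxiliary $s$-parallel-greedy graph. Introduce one copy-vertex for each of the $|A|$ units of the node-weighting $A$, and realize each $A$-respecting $D_i$ as a matching $\hat M_i$ on these copy-vertices with $|\hat M_i| = |D_i|$. Order the matchings by \emph{decreasing} cut index and set $\hat G := \left(\hat V, \bigsqcup_i \hat M_i\right)$. I claim $\hat G$ is $s$-parallel-greedy (\Cref{dfn:PGGraph}): the matching condition is immediate, and for the far-endpoints condition, suppose an edge of $\hat M_i$ sitting over a pair $\{u,v\}$ were joined, in the union of the matchings preceding $\hat M_i$ (those $\hat M_{i'}$ with $i' > i$), by a path of at most $s$ edges. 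Each edge of that path lies in some $\hat M_{i'}$ with $i'>i$, so its underlying demand pair is $h$-close in $G_{i'}$, hence also in $G_i^{+}$ (again since $G_{i'}$ dominates $G_i^{+}$ when $i'>i$); concatenating these at most $s$ pairs yields $d_{G_i^{+}}(u,v) \le sh = hs$, contradicting that $C_i$ $hs$-separates $\{u,v\}$ in $G_i$. By the hypothesis, $\hat G$ therefore has arboricity at most $\alpha(|A|,s)$, i.e.\ by \Cref{lem:NW} its $\sum_i|D_i|$ edges decompose into at most $\alpha(|A|,s)$ forests.

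Finally I would extract $D$. Take the largest forest $F$ in the decomposition, so $|E(F)| \ge \frac{1}{\alpha(|A|,s)}\sum_i|D_i|$, and invoke the elementary fact that every forest admits a family of endpoint-disjoint paths, each with at most two edges, covering at least half of its edges (a short bottom-up greedy). Each such path yields a new demand pair via move (i) (a single edge, kept as is) or move (ii) (a two-edge path $a-c-b$, whose two edges lie in distinct matchings $\hat M_i$, so that the index condition $i<i'$ is met after orienting the path), and by the second paragraph every resulting pair is $h'$-length and $h's'$-separated by $C$. Since the paths are endpoint-disjoint in $\hat V$, the resulting pairs form a matching on the copy-vertices, so projecting back to $V$ gives an $A$-respecting demand $D$ of size at least (number of paths) $\ge \frac{1}{4}\cdot\frac{1}{\alpha(|A|,s)}\sum_i|D_i|$; since $\left(1+\frac{1}{s-1}\right)\le 2$, this is at least $|C|/\phi'$, completing the proof. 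I expect the main obstacle to be the bookkeeping of the two middle steps: keeping straight which of the modified length functions $\ell + hs\sum_{j\in S}C_j$ dominates which (the argument is delicate about $i<i'$ versus $i\le i'$, and the matchings are ordered by \emph{decreasing} cut index precisely so that the domination points the right way in the parallel-greedy check), and correctly handling the directedness of demands and the consequent $A$-respecting accounting in the copy-vertex construction. Neither is deep, but each requires care.
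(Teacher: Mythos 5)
Your proposal is correct and follows essentially the same route as the paper: your copy-vertex graph with matchings ordered by decreasing cut index is exactly the demand matching graph of \Cref{def:demandMatching}, your parallel-greedy verification and your domination/triangle-inequality separation argument match \Cref{lem:arbBound} and \Cref{lem:sparseOfMatching}, and the arithmetic lands at the same $\frac{1}{4\alpha}\sum_i A_{(h,s)}(C_i)$ lower bound against $\phi' = 8\alpha \cdot \frac{\sum_i |C_i|}{\sum_i A_{(h,s)}(C_i)}$. The only (valid) deviation is the final extraction step: where the paper forms tree-matching demands over all $\alpha$ forests and scales by $\frac{1}{2\alpha}$ (\Cref{def:treeMatchDemand}, \Cref{dfn:matchingDemand}), you take endpoint-disjoint paths of at most two edges covering half the edges of the single largest forest, which yields an integral demand of the same size.
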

\noindent For completeness, we give a proof of \Cref{lem:UCviaPGArb} in \Cref{sec:sparseCutsVsPGArb}. Combining our arboricity bound (\Cref{thm:newPGArb}) and \Cref{lem:UCviaPGArb} immediately proves \Cref{thm:UofCuts}.

\section{Existence of Length-Constrained Expander Decompositions}
In this section we prove our main result, the existence of length-constrained expander decompositions with cut slack $O\left(s \cdot |A|^{2/s} \right)$. We state this result in full generality below.
\begin{restatable}[Existence of Length-Constrained Expander Decompositions]{theorem}{LCEDExist}
\label{thm:LCEDs}
    For any $h \geq 1$, $s \geq 2$, $\phi > 0$ and any graph $G = (V,E)$ with edge lengths and capacities and node weighting $A$, there is an $(h,s)$-length $\phi$-expander decomposition with respect to $A$ with cut slack $O\left(s \cdot |A|^{2/s} \right)$.
\end{restatable}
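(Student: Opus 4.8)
The plan is to derive \Cref{thm:LCEDs} as essentially a one-line consequence of the union-of-cuts fact \Cref{thm:UofCuts}, exactly as sketched informally in the introduction. First I would take a maximal $(h,s)$-length $\phi$-sparse sequence of length-constrained cuts $(C_1, C_2, \ldots)$ with respect to $A$ in $G$: repeatedly, as long as the current graph $G - \left(\sum_{j<i} C_j\right)_{hs}$ contains some $(h,s)$-length $\phi$-sparse cut with respect to $A$, pick one and append it to the sequence. I would need to briefly justify that this process terminates (so that the "union" is a finite, well-defined length-constrained cut). This is where capacities being integral and $\poly(n)$ helps: each cut $C_i$ in the sequence is $(h,s)$-length $\phi$-sparse, hence has demand-size $A_{(h,s)}(C_i) \geq 1$, and applying it strictly increases some edge length in a way that the total potential (measured, say, via a sum related to $|A|$ or the number of separable pairs) strictly decreases, so the sequence is finite. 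Let $C \defeq \left(1 + \tfrac{1}{s-1}\right)\sum_i C_i$ be our candidate expander decomposition.

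Next I would verify the two defining properties of an $(h,s)$-length $\phi$-expander decomposition with respect to $A$ from \Cref{def:LCED}, but applied at parameters $(h', s')$ where $h' = 2h$ and $s' = (s-1)/2$ — wait, here I should be careful: \Cref{thm:LCEDs} is stated at parameters $(h,s)$, not $(h',s')$, so the argument should instead be run "in reverse." Concretely: to build an $(h,s)$-length $\phi$-expander decomposition, I set $\tilde h$ and $\tilde s$ so that the output parameters $\tilde h' = 2\tilde h$ and $\tilde s' = (\tilde s - 1)/2$ of \Cref{thm:UofCuts} satisfy $\tilde h' = h$ and $\tilde s' = s$, i.e.\ $\tilde h = h/2$ and $\tilde s = 2s+1$; run the maximal-sparse-sequence process at $(\tilde h, \tilde s)$ with sparsity $\phi$ (adjusted by the $(1 + \tfrac{1}{\tilde s - 1})$ factor, which is $1 + \tfrac{1}{2s}$, a constant-order blowup absorbed into the slack). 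Then \Cref{thm:UofCuts} tells me $C = \left(1 + \tfrac{1}{\tilde s - 1}\right)\sum_i C_i$ is $(h, s)$-length $\phi'$-sparse with respect to $A$ with $\phi' = O\!\left(\tilde s \cdot |A|^{2/\tilde s} \cdot \frac{\sum_i |C_i|}{\sum_i A_{(\tilde h, \tilde s)}(C_i)}\right)$, and since the sequence is $(\tilde h, \tilde s)$-length $\phi$-sparse we have $\frac{\sum_i |C_i|}{\sum_i A_{(\tilde h, \tilde s)}(C_i)} \leq \phi$, so $\phi' = O(\tilde s \cdot |A|^{2/\tilde s}) \cdot \phi = O(s \cdot |A|^{O(1/s)}) \cdot \phi$. (One checks $|A|^{2/(2s+1)} = |A|^{O(1/s)}$ and $\tilde s = O(s)$.)

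Now for the size bound. The key observation is: \emph{any} cut that is $(h,s)$-length $\phi'$-sparse with respect to $A$ has size at most $\phi' \cdot |A|$, because its demand-size is at most $|A|$ (an $A$-respecting demand has total size at most $\sum_v A(v) = |A|$), and sparsity $\le \phi'$ means $|C| \le \phi' \cdot A_{(h,s)}(C) \le \phi' \cdot |A|$. Since $C$ is $(h,s)$-length $\phi'$-sparse with respect to $A$, we get $|C| \le \phi' \cdot |A| = O(s \cdot |A|^{2/s}) \cdot \phi |A|$, i.e.\ cut slack $O(s \cdot |A|^{2/s})$, as required. Finally, maximality of the sequence at $(\tilde h, \tilde s)$ gives that $G - \left(\sum_i C_i\right)_{\tilde h \tilde s}$ contains no $(\tilde h, \tilde s)$-length $\phi$-sparse cut. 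I would need a short monotonicity argument that this also implies $G - C_{hs}$ contains no $(h,s)$-length $\phi$-sparse cut with respect to $A$ — intuitively, scaling the moving cut up by $(1+\tfrac1{\tilde s -1})$ and working with the coarser parameters $(h,s)$ rather than $(\tilde h, \tilde s)$ only makes it harder to find a sparse cut, so the expander property is preserved; this is precisely the role of the length-slack/cut-scaling gymnastics in \Cref{thm:UofCuts}, and I'd cite the relevant monotonicity facts about demand-size under applying larger cuts.

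The main obstacle I anticipate is not any single hard step but making the parameter bookkeeping airtight: chasing $h, s, \phi$ through the $h' = 2h$, $s' = (s-1)/2$ transformation of \Cref{thm:UofCuts} in the "reverse" direction, confirming that $(1 + \tfrac{1}{\tilde s - 1})$ and the $|A|^{2/\tilde s}$ vs $|A|^{2/s}$ discrepancies are all of the right order, and — most delicately — verifying that maximality at the finer parameters really does yield the $(h,s)$-expander guarantee in $G - C_{hs}$ rather than merely in $G - \left(\sum_i C_i\right)_{\tilde h \tilde s}$. Termination of the greedy sequence also deserves a clean (if routine) argument since the cuts are fractional and lengths need not be polynomially bounded; I expect to appeal to the fact that each applied sparse cut $hs$-separates a nonempty $A$-respecting demand, strictly shrinking the set of $h$-length pairs, which is a finite set.
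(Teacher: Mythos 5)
Your first instinct (maximal $(h,s)$-length $\phi$-sparse sequence, union it, bound its size via \Cref{thm:UofCuts}) is exactly the paper's proof, but the ``wait, run it in reverse'' correction is where the proposal goes wrong, and it introduces a genuine gap. You appear to believe that the decomposition $C$ itself must be $(h,s)$-length sparse, so you retune the sequence to finer parameters $(\tilde h,\tilde s)=(h/2,2s+1)$ so that the \emph{output} parameters of \Cref{thm:UofCuts} land on $(h,s)$. But \Cref{def:LCED} never asks for $C$ to be $(h,s)$-length sparse; it asks (i) that $G-C_{hs}$ be an $(h,s)$-length $\phi$-expander with respect to $A$, and (ii) that $|C|\le \kappa\cdot\phi|A|$. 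In the paper's argument, (i) comes directly from maximality of the sequence \emph{at the original parameters} $(h,s)$, and (ii) comes from applying \Cref{thm:UofCuts} at whatever parameters it outputs---$h'=2h$, $s'=(s-1)/2$---together with \Cref{lem:limCutSize}, which bounds the size of a $\phi'$-sparse cut by $\phi'|A|$ \emph{regardless} of its length parameters. So there is no need for $(h',s')$ to equal $(h,s)$, and no reverse parameter chase.

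The gap created by your reversal is precisely the step you flag as ``delicate'': maximality at $(\tilde h,\tilde s)=(h/2,2s+1)$ only guarantees that the residual graph has no $(h/2,2s+1)$-length $\phi$-sparse cut, and the monotonicity you hope for---that this implies no $(h,s)$-length $\phi$-sparse cut---is false in the direction you need. An $(h,s)$-length $\phi$-sparse cut in the residual graph is witnessed by an $A$-respecting $h$-length demand, and such a demand need not be $h/2$-length, so it provides no $(h/2,2s+1)$-length sparse cut and no contradiction with maximality; expansion at a smaller length constraint does not transfer to a larger one (and not for free even with slack). Scaling the cut by $\bigl(1+\tfrac{1}{\tilde s-1}\bigr)$ does not rescue this, since only the length parameter, not the cut, is what shrank. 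The fix is simply to revert to your first paragraph: run the maximal sequence at $(h,s)$ with respect to $A$ (as in \Cref{dfn:cutSeq}), get the $(h,s)$-expander property of $G-C_{hs}$ from maximality, and use \Cref{thm:UofCuts} plus \Cref{lem:limCutSize} only to conclude $|C|\le \phi'|A| = O\bigl(s\cdot|A|^{2/s}\bigr)\cdot\phi|A|$. (Your observation that $\tfrac{\sum_i|C_i|}{\sum_i A_{(h,s)}(C_i)}\le\phi$ follows from per-cut sparsity is fine, and your termination concern, while not addressed in the paper's proof, is a side issue rather than the substance of the argument.)
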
 
\noindent Note that the earlier-stated simplified version of the above theorem---\Cref{thm:LCEDsSimple}---follows immediately from the above theorem by using capacity $1$ on all edges and taking $A$ to be the node-weighting $\deg_G$. As described earlier, we prove this by simply repeatedly cutting $(h,s)$-length $\phi$-sparse cuts and then applying our union of length-constrained cuts fact.

We begin by unpacking definitions to formalize the idea that no length-constrained $\phi$-sparse cut can have size larger than $\phi \cdot |A|$ when working with respect to node-weighting $A$. Note that if we had capacity one on every edge and were working with respect to the node-weighting $\deg_G$, this lemma would simply state than an $(h,s)$-length $\phi$-sparse cut has size at most $\phi m$.
\begin{lemma}\label{lem:limCutSize}
    Given graph $G = (V,E)$ with edge lengths and capacities, $h \geq 1$, $s \geq 2$ and $\phi > 0$ and node-weighting $A$, any $(h,s)$-length $\phi$-sparse cut with respect to $A$ has size at most $\phi \cdot |A|$.
\end{lemma}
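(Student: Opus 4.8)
The plan is to unpack the definition of sparsity and use the trivial bound that every $A$-respecting demand has size at most $|A|$. Let $C$ be an $(h,s)$-length $\phi$-sparse cut with respect to $A$. By \Cref{def:sparsity}, this means $\spa_{(h,s)}(C,A) = |C| / A_{(h,s)}(C) \le \phi$, so $|C| \le \phi \cdot A_{(h,s)}(C)$. It therefore suffices to show that the demand-size satisfies $A_{(h,s)}(C) \le |A|$.

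The key step is bounding $A_{(h,s)}(C)$. By \Cref{def:demandSize}, $A_{(h,s)}(C)$ is the size $|D|$ of the largest $A$-respecting $h$-length demand $D$ that is $hs$-separated by $C$. Dropping the $hs$-separation and $h$-length requirements only enlarges the feasible set, so it is enough to show that every $A$-respecting demand $D$ has $|D| \le |A|$. This is immediate from \Cref{dfn:demandResp}: for each vertex $v$ we have $\sum_{w \in V} D(v,w) \le A(v)$, and summing over all $v$ gives
\begin{align*}
    |D| = \sum_{v,w} D(v,w) = \sum_{v} \sum_{w} D(v,w) \le \sum_v A(v) = |A|.
\end{align*}

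Combining the two displays, $|C| \le \phi \cdot A_{(h,s)}(C) \le \phi \cdot |A|$, as claimed. I do not anticipate any real obstacle here; the statement is essentially a definitional sanity check, with the only mild subtlety being to remember that "the size of the largest $A$-respecting demand" is at most $|A|$ because the out-degree constraints in the definition of $A$-respecting sum to exactly $|A|$ over all vertices.
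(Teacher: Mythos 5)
Your proposal is correct and follows essentially the same route as the paper: unpack \Cref{def:sparsity} to get $|C| \le \phi \cdot A_{(h,s)}(C)$, then bound $A_{(h,s)}(C) \le |A|$ via \Cref{dfn:demandResp} (your explicit summation over vertices just spells out what the paper asserts directly). No gaps.
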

\begin{proof}
	The proof follows immediately from the relevant definitions. Let $C$ be an $(h,s)$-length $\phi$-sparse cut with respect to $A$. Our goal is to show $|C| \leq \phi \cdot |A|$. By the definition of an $(h,s)$-length $\phi$-sparse cut with respect to $A$ (\Cref{def:sparsity}) , we know that 
	\begin{align}\label{eq:teds}
		\spa_{(h,s)}(C,A) = \frac{|C|}{A_{(h,s)}(C)} = \phi.
	\end{align}
	
	However, notice that by the definition of the $(h,s)$-length demand-size of $C$ (\Cref{def:demandSize}), we know $A_{(h,s)}(C)$ is the size of the largest $h$-length $A$-respecting demand which is $hs$-separated by $C$. However, by definition of an $A$-respecting demand (\Cref{dfn:demandResp}), any demand which is $A$-respecting has size at most $|A|$ and so 
    \begin{align}\label{eq:preb}
        A_{(h,s)}(C) \leq |A|
    \end{align}

    Combining \Cref{eq:teds} and \Cref{eq:preb} we get
	\begin{align*}
		|C| =  \phi \cdot A_{(h,s)}(C) \leq \phi \cdot |A|
	\end{align*}
	as required.
\end{proof}

We now complete our proof of the existence of length-constrained expander decompositions using our new union of length-constrained cuts fact. See \Cref{def:LCED} for a reminder of the definition of a length-constrained expander decomposition.
\LCEDExist*
\begin{proof}
    Let $(C_1, C_2, \ldots)$ be an $(h,s)$-length $\phi$-sparse sequence with respect to $A$ that is maximal and let $C := (1+\frac{1}{s-1})\sum_i C_i$ be the ``union'' of this sequence. We claim that $C$ is an $(h,s)$-length $\phi$-expander decomposition with respect to $A$ with cut slack $O(s \cdot |A|^{2/s})$ and length slack $s$. $G - C_{hs}$ must be an $(h,s)$-length $\phi$-expander with respect to $A$ since $(C_1, C_2, \ldots)$ is maximal. 

     It remains only to show that the cut slack of $C$ is at most $O\left(s \cdot |A|^{2/s}\right)$ for which it suffices to bound the size of $C$ as at most $O(s \cdot |A|^{2/s} \cdot \phi |A|)$. Applying the fact that the union of sparse length-constrained cuts is a sparse length-constrained cut (\Cref{thm:UofCuts}), we have that $C$ is an $(h',s')$-length $\phi'$-sparse cut with respect to $A$ with 
     \begin{align*}
        h' = 2h \text{\qquad and \qquad} s' = \frac{(s-1)}{2} \text{\qquad and \qquad} \phi' = O\left(s \cdot |A|^{2/s} \cdot \frac{\sum_i |C_i|}{\sum_i A_{(h,s)}(C_i)} \right).
    \end{align*}
    where $A_{(h,s)}(C_i)$ is the demand-size of $C_i$ in $G - \left(\sum_{j < i} C_j\right)_{hs}$ .

    Since $(C_1, C_2, \ldots)$ is  $(h,s)$-length $\phi$-sparse with respect to $A$, we get $\frac{|C_i|}{A_{(h,s)}(C_i)} \leq \phi$ for each $i$ so 
    \begin{align*}
        \sum_i \frac{|C_i|}{A_{(h,s)}(C_i)} \leq \phi.
    \end{align*}
    meaning
    \begin{align*}
        \phi' = O(s \cdot |A|^{2/s} \cdot \phi).
    \end{align*}
    However, by \Cref{lem:limCutSize}, we know that $|C| \leq \phi' \cdot |A|$ and so $|C| \leq O(s \cdot |A|^{2/s} \cdot \phi |A|)$ as required.
\end{proof}

\appendix
\section{Union of Length-Constrained Cuts via Parallel Greedy Arboricity}\label{sec:sparseCutsVsPGArb}

In this section we prove that a bound on the arboricity of $s$-parallel-greedy graphs also bounds the loss in sparsity when taking the union of sparse length-constrained cuts. The following summarizes this.

\UCviaPGArb*
\noindent We give a proof of the above for completeness. Most of the proofs (and figure) from this section are taken from \cite{haeupler2024new} but modified to match our notation and conventions where appropriate.

At a high level, the proof works as follows. Given our sequence of sparse length-constrained cuts, we consider the witness demands of these cuts. Next, we observe that these witness demands induce a particular graph---the demand matching graph---which is $s$-parallel-greedy. We then use a forest cover of this graph to define a new demand---the matching-dispersed demand---which is nearly as large as sum of all of the witness demands and which is separated by the union of our sequence of sparse length-constrained cuts. Such a demand witnesses that the union of our sequence of sparse length-constrained cuts is nearly as sparse as the individual cuts.

\subsection{The Demand Matching Graph}

We begin by formalizing the graph induced by the witness demands of our cut sequence. We call this graph the demand matching graph. Informally, this graph simply creates $A(v)$ copies for each vertex $v$ and then matches copies to one another in accordance with the witness demands.
\begin{definition}[Demand Matching Graph]\label{def:demandMatching}
Given a graph $G = (V,E)$, a node-weighting $A$ and $A$-respecting demands $\mcD = (D_1, D_2, \ldots )$ we define the demand matching graph $G(\mcD) = (V', E')$ as follows:
\begin{itemize}
    \item  \textbf{Vertices:} $H$ has vertices $V' = \bigsqcup_v \copies(v)$ where $\copies(v)$ is $A(v)$ unique ``copies'' of $v$.
    \item \textbf{Edges:} For each demand $D_i$, let $E_i$ be any matching where the number of edges between $\copies(u)$ and $\copies(v)$ for each $u,v \in V$ is $D_i(u,v)$. Then $E' = \bigcup_i E_i$.
\end{itemize}
\end{definition}

\noindent We observe that the demand-matching graph is $s$-parallel greedy.
\begin{restatable}{lemma}{demandArb} \label{lem:arbBound} Let $C_1, C_2, \ldots, C_k$ be a sequence of $(h, s)$-length cuts with respect to node weighting $A$ with witness demands $\mcD = (D_1, D_2, \ldots, D_k)$. Then the demand-matching graph $G(\mcD)$ is an $|A|$-node $s$-parallel-greedy graph.
\end{restatable}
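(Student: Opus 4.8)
The statement to prove is Lemma~\ref{lem:arbBound}: the demand-matching graph $G(\mcD)$ built from the witness demands of a sequence of $(h,s)$-length cuts is an $|A|$-node $s$-parallel-greedy graph. The node count is immediate from the definition of $G(\mcD)$: it has $\sum_v A(v) = |A|$ vertices. The real content is exhibiting the parallel-greedy structure, i.e. a decomposition of $E' = E_1 \sqcup E_2 \sqcup \cdots \sqcup E_k$ into matchings whose endpoints are far in the graph built from the earlier matchings.

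The plan is as follows. By construction, each $E_i$ is a matching on $V'$ (this is exactly what \Cref{def:demandMatching} requires of the sets $E_i$), so property~(1) of \Cref{dfn:PGGraph} is automatic. For property~(2), fix an edge $\{u', v'\} \in E_i$, say $u'$ is a copy of $u \in V$ and $v'$ a copy of $v \in V$; then $D_i(u,v) > 0$, so the pair $(u,v)$ contributes to the witness demand $D_i$ of the cut $C_i$. Since $D_i$ is an $h$-length demand that is $hs$-separated by $C_i$ in the graph $G - (\sum_{j<i} C_j)_{hs}$, the endpoints $u$ and $v$ are $hs$-separated there, meaning their distance exceeds $hs$ after the length increases from $C_1, \dots, C_{i-1}$ (scaled by $hs$) have been applied. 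The goal is to turn this ``$hs$-separated in $G$ after earlier cuts'' statement into ``distance $> s$ in $G_{i-1} := (V', E_1 \cup \cdots \cup E_{i-1})$ with unit-length edges.'' The bridge is the following: any $j$-edge path $u' = w_0', w_1', \dots, w_j' = v'$ in $G_{i-1}$ projects (by sending each copy to its original vertex) to a walk $u = w_0, w_1, \dots, w_j = v$ in $G$ where each step $\{w_{t-1}, w_t\}$ comes from some demand $D_{i_t}$ with $i_t < i$, hence $d_G(w_{t-1}, w_t) \le h$ in the original graph and, crucially, in $G - (\sum_{\ell < i} C_\ell)_{hs}$ the length increase along such a step is exactly $hs \cdot C_{i_t}(\cdot)$-bounded — one needs to show each demand edge $\{w_{t-1},w_t\}$ with $d_G \le h$ survives as a path of length $\le h$ even after applying $C_1,\dots,C_{i-1}$, because $\{w_{t-1},w_t\}$ was $h$-length \emph{in the graph after $C_1,\dots,C_{i_t-1}$} and not $hs$-separated by $C_{i_t}$, etc. Chaining, a path of $j \le s$ edges in $G_{i-1}$ would give a path of length $\le j \cdot h \le hs$ between $u$ and $v$ in $G - (\sum_{\ell<i} C_\ell)_{hs}$, contradicting that $C_i$ $hs$-separates $(u,v)$ there. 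Hence $d_{G_{i-1}}(u',v') > s$, which is property~(2).

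The main obstacle is the bookkeeping in that chaining argument: one must be careful that ``$h$-length'' for demand $D_{i_t}$ is measured in $G$ after applying only $C_1, \dots, C_{i_t - 1}$, not after all of $C_1,\dots,C_{i-1}$, so a priori applying the additional cuts $C_{i_t}, \dots, C_{i-1}$ could blow up the length of that demand edge's shortest path. The resolution is to invoke that each $C_{i_t}$ does \emph{not} $hs$-separate the pair realizing $D_{i_t}(w_{t-1},w_t)>0$ (it is an $h$-length pair of $D_{i_t}$, and $D_{i_t}$ is only claimed to be $hs$-separated on the support of the witness demand — but in fact by monotonicity of applying more cuts and the definition of the sequence, the relevant bound is that adding cuts only ever increases lengths, and the witness demand of $C_i$ is $h$-length in $G-(\sum_{j<i}C_j)_{hs}$, so we should measure everything there). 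The clean way is to argue directly in $G - (\sum_{j<i} C_j)_{hs}$: in that graph each edge of $G_{i-1}$, being a demand edge of some $D_{i_t}$ with $i_t < i$ whose pair is $h$-length in $G - (\sum_{j < i_t} C_j)_{hs}$ and is \emph{not} $hs$-separated by $C_{i_t}, C_{i_t+1}, \dots, C_{i-1}$ (none of these cuts $hs$-separates it — this needs the witness/sparsity structure), realizes distance $\le h$. Once that per-edge claim is pinned down, summing over the $\le s$ edges of the path and comparing with the $> hs$ separation of $(u,v)$ by $C_i$ closes the argument. I expect the statement and its proof to essentially mirror the corresponding lemma in \cite{haeupler2024new}, with the only adaptation being to the node-weighting notation used here.
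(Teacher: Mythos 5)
Your reduction to \Cref{dfn:PGGraph} uses the matchings in the \emph{same} order as the cut sequence (you set $G_{i-1} := (V', E_1 \cup \cdots \cup E_{i-1})$), and the per-edge claim you need to close the chaining argument---that every edge of $G_{i-1}$ realizes distance at most $h$ in $G - (\sum_{j<i} C_j)_{hs}$---is false, as is your justification for it. An edge of $E_{i_t}$ with $i_t < i$ corresponds to a support pair of the \emph{witness} demand $D_{i_t}$, and by the defining property of a witness demand that pair \emph{is} $hs$-separated by $C_{i_t}$ in $G - (\sum_{j < i_t} C_j)_{hs}$; since applying further cuts only increases distances, in $G - (\sum_{j<i} C_j)_{hs}$ that pair is at distance strictly greater than $hs$, not at most $h$. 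So your parenthetical ``none of $C_{i_t},\dots,C_{i-1}$ $hs$-separates it'' contradicts the witness property, and the monotonicity you invoke runs in the wrong direction: $D_{i_t}$ being $h$-length in the \emph{less}-cut graph says nothing about the more-cut graph you want to argue in. Nor is this a repairable bookkeeping issue: the same-order decomposition can genuinely fail to be parallel-greedy. Take $s=3$, $G$ the four-cycle $u,a,b,v$ with all edge lengths $h$ and unit capacities, $A \equiv 1$; let $C_1$ fully cut $\{u,a\}$ and $\{b,v\}$, $C_2$ cut $\{a,b\}$, $C_3$ cut $\{u,v\}$, with ($h$-length, $hs$-separated) demands $D_1$ on $(u,a),(b,v)$, $D_2$ on $(a,b)$, $D_3$ on $(u,v)$. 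Then $E_1 \cup E_2$ is the path $u'\!-\!a'\!-\!b'\!-\!v'$, so the $E_3$ edge $\{u',v'\}$ has distance exactly $3 \not> s$ in your $G_2$.

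The paper's proof differs in exactly the point you are missing: it orders the matchings in \emph{reverse}, taking the graph of previously placed matchings to be $E_k \cup E_{k-1} \cup \cdots \cup E_i$ and considering an edge of $E_{i-1}$. Then every edge of a hypothetical short path comes from a demand $D_j$ with $j \ge i$, which is $h$-length in a graph with \emph{more} cuts applied; since cuts only increase distances, its endpoints are at distance at most $h$ in $G - (\sum_{l \le i-1} C_l)_{hs}$, while the endpoints of the $E_{i-1}$ edge are at distance more than $hs$ in that same graph because $C_{i-1}$ $hs$-separates $D_{i-1}$ there. A path of at most $s$ such short edges then contradicts this separation by the triangle inequality. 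With the reversal, the distance monotonicity points the right way and the argument closes; without it, as the example above shows, the far-endpoints property itself can fail.
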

\begin{proof}
$G(\mcD)$ has $|A|$ nodes by construction. 

It remains to show that $G(\mcD)$  is an $s$-parallel-greedy graph as defined in \Cref{dfn:PGGraph}. By construction, $G(\mcD)$ is the union of a sequence of matchings $E_1, E_2, \ldots, E_k$ with corresponding demands $D_1, D_2, \ldots, D_k$ as described in \Cref{def:demandMatching}. For each $i$, we let $G_{i}$ be the graph resulting from the union of all matchings $E_k, E_{k-1}, \ldots, E_i$ (note that the matchings are in reverse order). Consider an edge $e = \{x_0',x_s'\}$ in $E_{i-1}$, the matching corresponding to demand $D_{i-1}$. By definition of $s$-parallel-greedy graphs, it suffices to show that there is no path from $u$ to $v$ in $G_i$ of length at most $s$. 

Assume for contradiction that such a path $P=(x_0',x_1',\ldots,x_{s-1}',x_s')$ existed in $G_{i}$ where each $x_l' \in \copies(x_l)$ for some $x_l \in V$.
By definition of the demand matching graph and the fact that applying cuts only increases distance, we have that every pair $(x_l, x_{l+1})$ is at distance at most $h$ in graph $G-\sum_{j\le i-1}C_{j}$. By triangle inequality, this implies that the distance between $x_0$ and $x_s$ in $G-\sum_{j\le i-1}C_{j}$ is less than $hs$. However, since $C_{i-1}$ is a cut that $hs$-separates all pairs in the support of $D_{i-1}$ which includes the pair $\{x_0, x_s\}$, we have that the distance between $x_0$ and $x_s$ in $G-\sum_{j\le i-1}C_{j}$ is strictly larger than $hs$, a contradiction.
\end{proof}

\subsection{Matching-Dispersed Demand}
In the previous section, we formalized the graph induced by the witness demands $(D_1, D_2, \ldots)$ of a sequence of length-constrained cuts. We now discuss how to use a forest decomposition (see \Cref{sec:conventions}) of this graph to construct a new demand by ``dispersing'' the demands $(D_1, D_2, \ldots)$ so that the result is of similar size to the original demands.

\begin{figure}
    \centering
    \begin{subfigure}[b]{0.48\textwidth}
        \centering
        \includegraphics[width=\textwidth,trim=0mm 0mm 0mm 0mm, clip]{./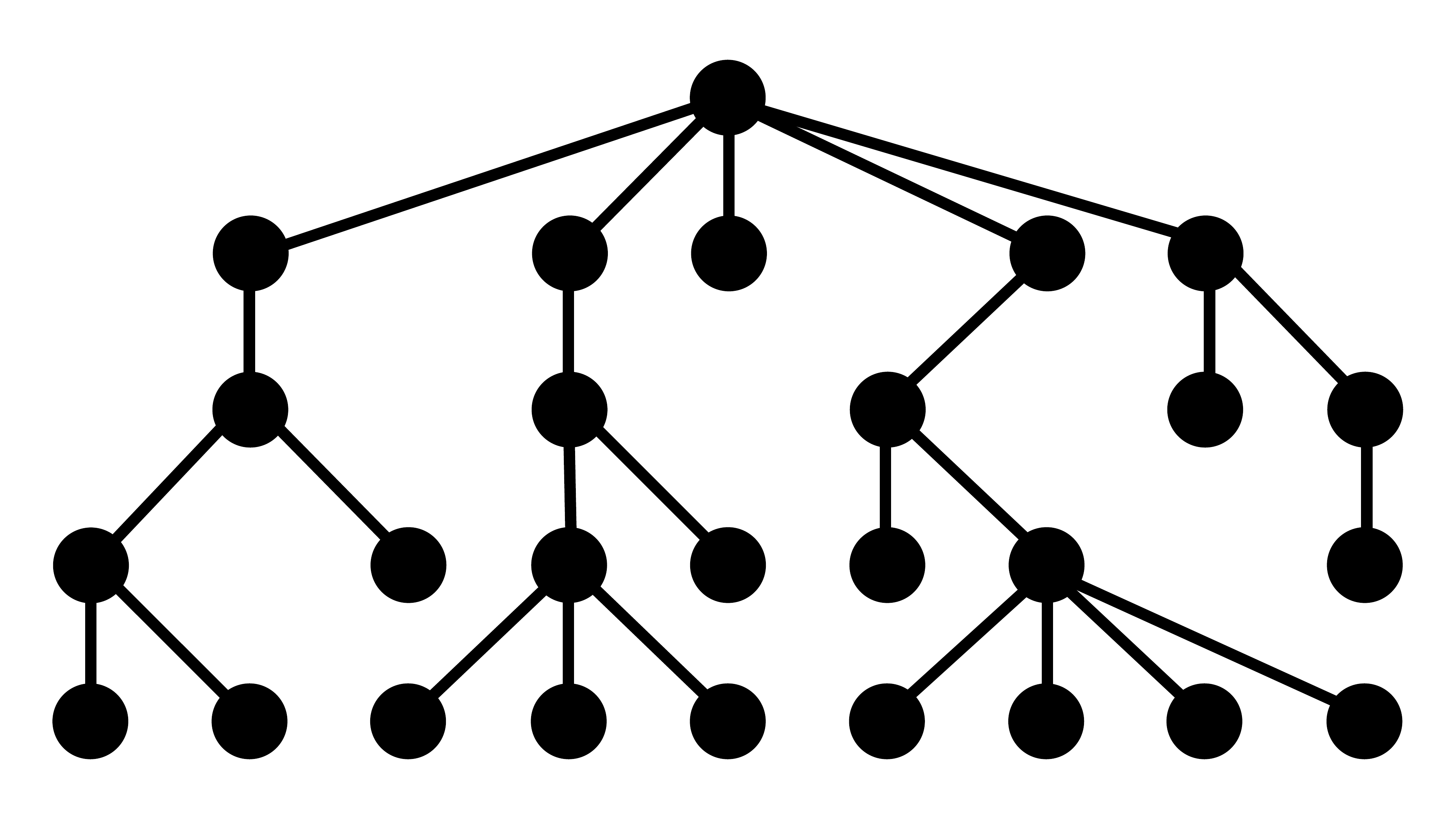}
        \caption{Input tree $T$.}\label{sfig:dispTree1}
    \end{subfigure}    \hfill
    \begin{subfigure}[b]{0.48\textwidth}
        \centering
        \includegraphics[width=\textwidth,trim=0mm 0mm 0mm 0mm, clip]{./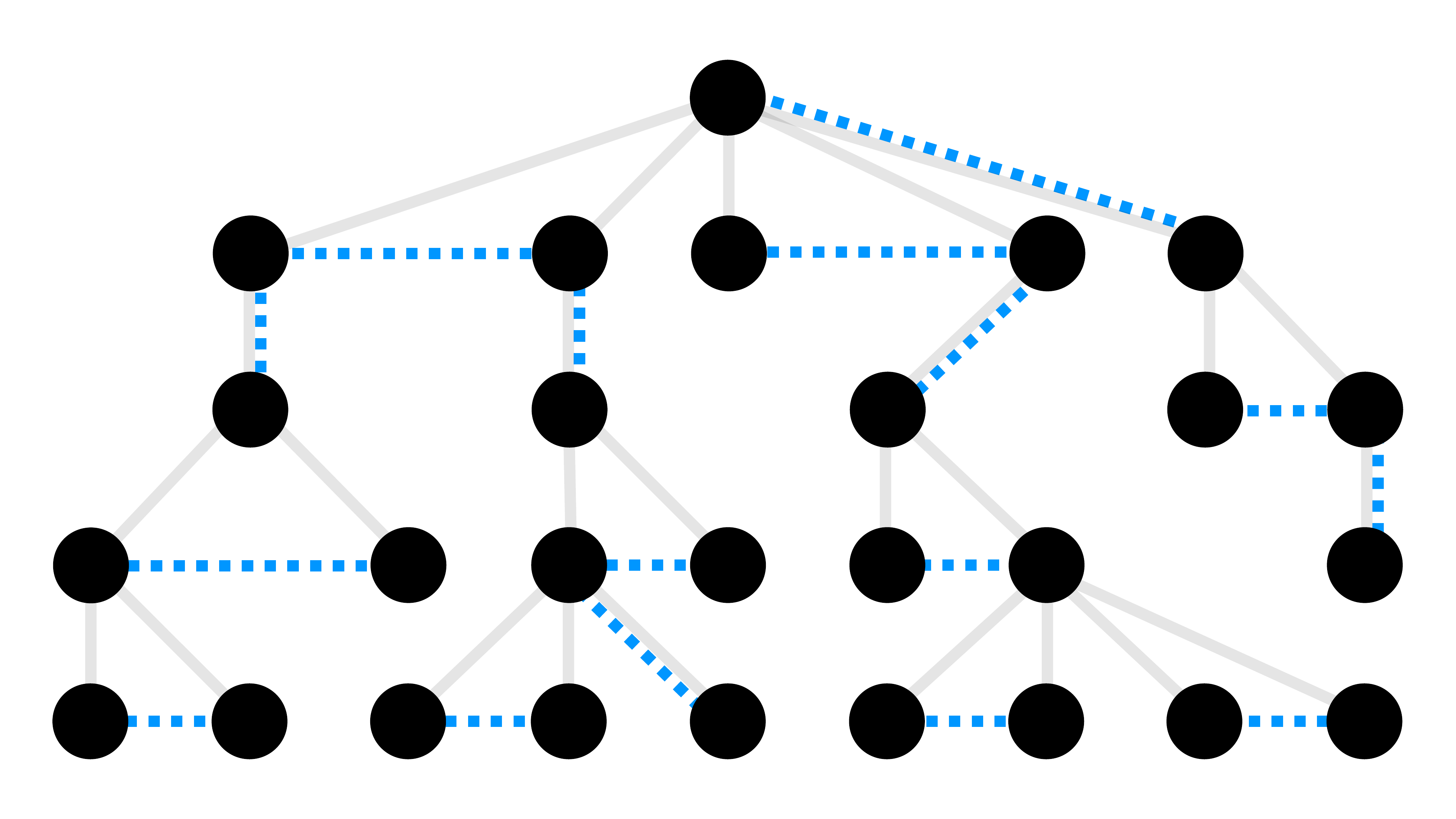}
        \caption{$\disperse_T$.}\label{sfig:dispTree2}
    \end{subfigure}
    \caption{How we disperse demand given a tree $T$ (\ref{sfig:dispTree1}). \ref{sfig:dispTree2} gives the support of $\disperse_T$ dashed in blue; notice that each vertex has degree at most $2$.}\label{fig:disperseTree}
\end{figure}

The following notion of a tree matching demand formalizes how we disperse the demand in each tree of the forest decomposition of the demand matching graph. Informally, given a tree this demand simply matches siblings in the tree to one another. If there are an odd number of siblings, the leftover child is matched to its parent. See \Cref{fig:disperseTree} for an illustration.

\begin{definition}[Tree Matching Demand]\label{def:treeMatchDemand}
Given tree $T = (V,E)$, we define the tree-matching demand on $T$ as follows. Root $T$ arbitrarily. For each vertex $v$ with children $C_v$ do the following. If $|C_v|$ is odd let $U_v = C_v \cup \{v\}$, otherwise let $U_v = C_v$. Let $M_v$ be an arbitrary perfect matching on $U_v$ and define the demand associated with $v$ as 
\begin{align*}
    D_v(u_1, u_2) := 
    \begin{cases}
        1 & \text{if $\{u_1,u_2\} \in M_v$}\\
        0 & \text{otherwise}.
    \end{cases}
\end{align*}
where each edge in $M_v$ has an arbitrary canonical $u_1$ and $u_2$. The tree matching demand for $T$ is
\begin{align*}
    \disperse_{T} := \sum_{v \text{ internal in }T} D_v
\end{align*}
\end{definition}

We observe that a tree matching demand has size equal to the input size (up to constants).
\begin{lemma} \label{lem:treeMatchSize}
Let $T$ be a tree with $n$ vertices. Then $|\disperse_T| \geq \frac{n-1}{2}$.
\end{lemma}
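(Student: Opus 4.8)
The plan is to compute $|\disperse_T|$ directly by summing the sizes of the tree matching demands $D_v$ over all internal vertices $v$, and then to lower-bound that sum by an elementary counting argument on the rooted tree.

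First I would note that the size functional is additive over sums of demands: since $|D| = \sum_{u,w} D(u,w)$, we have $|D + D'| = |D| + |D'|$ for any demands $D, D'$, and hence $|\disperse_T| = \sum_{v \text{ internal}} |D_v|$ with no need to worry about whether the supports of the various $D_v$ overlap (they do not, but this is irrelevant for the size). Next, since $D_v$ is the indicator demand of a perfect matching $M_v$ on $U_v$ — one unit of demand per matched pair — its size is $|D_v| = |M_v| = |U_v|/2$. Here $|U_v|$ is indeed even, because by \Cref{def:treeMatchDemand} $U_v = C_v$ when $|C_v|$ is even and $U_v = C_v \cup \{v\}$ when $|C_v|$ is odd; in both cases $|U_v| \ge |C_v|$, so $|D_v| \ge |C_v|/2$.

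Then I would use the standard fact that in a rooted tree every vertex other than the root is the child of exactly one vertex, so $\sum_{v} |C_v| = n - 1$ where the sum ranges over all vertices of $T$. Leaves contribute $0$ to this sum, so $\sum_{v \text{ internal}} |C_v| = n - 1$ as well. Combining the two displays gives
\begin{align*}
    |\disperse_T| = \sum_{v \text{ internal in } T} |D_v| \;\ge\; \frac{1}{2}\sum_{v \text{ internal in } T} |C_v| \;=\; \frac{n-1}{2},
\end{align*}
which is the claimed bound.

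There is no substantive obstacle here; the only points that need a moment's care are (i) verifying that $|U_v|$ is always even, so that the perfect matching $M_v$ appearing in \Cref{def:treeMatchDemand} genuinely exists, and (ii) invoking additivity of $|\cdot|$ so that the sizes of the $D_v$ can simply be added regardless of support overlap.
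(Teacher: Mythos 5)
Your proof is correct and follows essentially the same route as the paper's: both sum the matching sizes $|M_v| = |U_v|/2$ over internal vertices and lower-bound the total by counting that every non-root vertex is a child of exactly one vertex, giving a contribution of at least $n-1$ overall. The only cosmetic difference is that you bound $|U_v| \ge |C_v|$ and sum the $|C_v|$, whereas the paper directly notes $\sum_v |U_v| \ge n-1$; these are the same counting argument.
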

\begin{proof}
    For each $v$ that is internal in $T$ let the vertices $U_v$ and the perfect matching $M_v$ on $U_v$ be as defined in \Cref{def:treeMatchDemand}. Then, observe that $\sum_v |U_v| \geq n-1$ since every vertex except for the root appear in at least one $U_v$. On the other hand, for each $v$ since $M_v$ is a perfect matching on $U_v$ we have $|M_v| = \frac{1}{2} |U_v|$ and since $|\disperse_T| = \sum_{v \text{ internal in } T} |M_v|$, it follows that $|\disperse_T| \geq \frac{n-1}{2}$ as required.
\end{proof}

Having formalized how we disperse a demand on a single tree with the tree matching demand, we now formalize how we disperse an arbitrary demand by taking a forest cover, applying the matching-dispersed demand to each tree and then scaling down by the arboricity.

\begin{definition}[Matching-Dispersed Demand]\label{dfn:matchingDemand}
Given graph $G$, node-weighting $A$ and demands $\mcD = (D_1, D_2, \ldots)$, let $G(\mcD)$ be the demand matching graph (\Cref{def:demandMatching}), let $T_1, T_2, \ldots$ be the trees of a minimum size forest cover with $\alpha$ forests of $G(\mcD)$ and let $\disperse_{T_1}, \disperse_{T_2}, \ldots$ be the corresponding tree matching demands (\Cref{def:treeMatchDemand}). Then, the matching-dispersed demand on nodes $u,v \in V$ is
\begin{align*}
    \disperse_{\mcD, A}(u,v) := \frac{1}{2\alpha} \cdot \sum_i \sum_{u' \in \copies(u)}  \sum_{v' \in \copies(v)}\disperse_{T_i}(u', v').
\end{align*}
\end{definition}

We begin with a simple helper lemma that observes that the matching-dispersed demand has size essentially equal to the input demands (up to the arboricity).
\begin{lemma}\label{lem:matchingDemandSize}
    Given graph $G$, node-weighting $A$ and and demands $\mcD = (D_1, D_2, \ldots)$ where $G(\mcD)$ has arboricity $\alpha$, we have that the matching-dispersed demand $\disperse_{\mcD, A}$ satisfies $|\disperse_{\mcD, A}| \geq \frac{1}{4 \alpha} \sum_i |D_i|$.
\end{lemma}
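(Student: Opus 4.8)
The plan is to prove \Cref{lem:matchingDemandSize} by chasing through the definitions of the matching-dispersed demand and the tree matching demand, combining \Cref{lem:treeMatchSize} with the fact that the demand matching graph $G(\mcD)$ has exactly $\sum_i |D_i|$ edges. First I would record that, by construction of $G(\mcD)$ in \Cref{def:demandMatching}, each demand $D_i$ contributes a matching $E_i$ with $|E_i| = |D_i|$ edges, and these are edge-disjoint, so the total edge count is $|E'| = \sum_i |D_i|$. Let $\alpha$ be the arboricity of $G(\mcD)$ and let $T_1, T_2, \ldots, T_\alpha$ be the trees of a minimum forest cover; then every edge of $G(\mcD)$ lies in exactly one $T_j$ (we can throw away duplicate coverage), so $\sum_j |E(T_j)| \ge \sum_i |D_i|$. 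Writing $n_j$ for the number of vertices spanned by $T_j$ (i.e.\ nonisolated in that forest), we have $|E(T_j)| = n_j - (\text{number of tree components})$, which we can loosely bound as $n_j - 1$ per tree component; organizing the bookkeeping so that \Cref{lem:treeMatchSize} applies component-by-component, we get $|\disperse_{T_j}| \ge \tfrac{1}{2}|E(T_j)|$ for each $j$.

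Next I would unpack the size of $\disperse_{\mcD, A}$. By \Cref{dfn:matchingDemand}, $|\disperse_{\mcD, A}| = \sum_{u,v} \disperse_{\mcD,A}(u,v) = \frac{1}{2\alpha} \sum_j \sum_{u,v} \sum_{u' \in \copies(u), v' \in \copies(v)} \disperse_{T_j}(u',v')$. Since every vertex of $G(\mcD)$ is a copy of exactly one vertex of $V$, the inner triple sum collapses to $\sum_{u', v' \in V'} \disperse_{T_j}(u', v') = |\disperse_{T_j}|$. Hence $|\disperse_{\mcD, A}| = \frac{1}{2\alpha} \sum_j |\disperse_{T_j}| \ge \frac{1}{2\alpha} \cdot \frac{1}{2} \sum_j |E(T_j)| \ge \frac{1}{4\alpha} \sum_i |D_i|$, which is exactly the claimed bound.

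The main obstacle I expect is the off-by-constant bookkeeping around the ``$-1$'' in \Cref{lem:treeMatchSize}: that lemma states $|\disperse_T| \ge \frac{n-1}{2}$ for an $n$-vertex tree, so if a forest $T_j$ has many components the additive losses could in principle accumulate. The clean way around this is to observe that $|\disperse_T| \ge \frac{1}{2}|E(T)|$ holds for \emph{any} forest $T$ (not just a tree), since $|E(T)| = n - (\#\text{components}) \le n - 1$ when $T$ is nonempty, and applying \Cref{lem:treeMatchSize} to each tree component separately gives $|\disperse_{T}| = \sum_{\text{components } T'} |\disperse_{T'}| \ge \sum_{T'} \frac{|V(T')| - 1}{2} = \frac{1}{2}\sum_{T'}|E(T')| = \frac{1}{2}|E(T)|$. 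With this reformulation the rest is a routine substitution, so I would lead with establishing $|\disperse_{T_j}| \ge \frac{1}{2}|E(T_j)|$ and then do the double-counting described above.
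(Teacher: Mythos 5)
Your proposal is correct and takes essentially the same route as the paper: count the edges of $G(\mcD)$ as $\sum_i |D_i|$, apply \Cref{lem:treeMatchSize} tree-by-tree in the forest cover to get $\sum_j |\disperse_{T_j}| \geq \frac{1}{2}\sum_i |D_i|$, and then divide by $2\alpha$ via the definition of $\disperse_{\mcD,A}$. Your extra bookkeeping about multi-component forests is fine but unnecessary, since \Cref{dfn:matchingDemand} already takes the $T_j$ to be the individual trees of the forest cover.
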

\begin{proof}
    Observe that the number of edges in $G(\mcD)$ is exactly $\sum_i |D_i|$ and so summing over each tree $T_j$ in our forest cover and applying \Cref{lem:treeMatchSize} gives
\begin{align*}
    \sum_j |\disperse_{T_j}| \geq \frac{1}{2} \cdot \sum_i |D_i|.
\end{align*}
Combining this with the definition of $\disperse_{\mcD, A}$ (\Cref{dfn:matchingDemand}) gives the claim.
\end{proof}

We now argue the key properties of the matching-dispersed demand which will allow us to argue that it can be used as a large demand separated by the union of our cuts. 
\begin{lemma}[Properties of Matching-Dispersed Demand]\label{lem:sparseOfMatching}
    Given graph $G = (V, E)$ and node-weighting $A$, let $C_1, C_2, \ldots$ and $\mcD= (D_1, D_2, \ldots)$ be a sequence of length-constrained cuts and $A$-respecting $h$-length demands respectively where $C_i$ $hs$-separates $D_i$ in $G - \left(\sum_{l < i} C_i \right)_{hs}$. Then, if $G(\mcD)$ has arboricity $\alpha$, the matching dispersed demand $\disperse_{\mcD, A}$ is
    \begin{enumerate}
        \item \textbf{Length-Constrained Respecting:} a $2h$-length $A$-respecting demand;
        \item \textbf{Separated:} $h \cdot (s-1)$-separated by $\left(1+\frac{1}{s-1}\right)\sum_i C_i$;
        \item \textbf{Large:} of size $|\disperse_{\mcD,A}| \geq \frac{1}{4 \alpha} \sum_i A_{(h,s)}(C_i)$.
    \end{enumerate}
\end{lemma}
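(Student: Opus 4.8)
The plan is to verify the three claimed properties of $\disperse_{\mcD, A}$ in turn, where the ``Large'' property is immediate from \Cref{lem:matchingDemandSize} together with the ``Large'' property of the witness demands: since each $C_i$ $hs$-separates its witness demand $D_i$ and $D_i$ is a maximizer in \Cref{def:demandSize}, we have $|D_i| = A_{(h,s)}(C_i)$, so $|\disperse_{\mcD,A}| \geq \frac{1}{4\alpha}\sum_i |D_i| = \frac{1}{4\alpha}\sum_i A_{(h,s)}(C_i)$. That disposes of item 3 with essentially no work.

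For item 1 (\textbf{Length-Constrained Respecting}), I would argue the two parts separately. For the $A$-respecting part: each $\disperse_{T_i}$ is a tree-matching demand, hence each vertex of $G(\mcD)$ has $\disperse_{T_i}$-degree at most $2$ (a vertex is matched to at most one sibling, and possibly once more as a leftover child to its parent); summing over the $\alpha$ trees of the forest cover, each copy-vertex of $G(\mcD)$ has total degree at most $2\alpha$; collapsing copies back to their originating vertex $v$ and dividing by $2\alpha$ as in \Cref{dfn:matchingDemand}, the load at $v$ is at most $\frac{1}{2\alpha}\cdot A(v)\cdot 2\alpha = A(v)$, so the demand is $A$-respecting. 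For the $2h$-length part: an edge of $\disperse_{T_i}$ joins two copy-vertices $u', v'$ that are at tree-distance at most $2$ in $T_i$ (siblings, or parent-child), and each tree edge of $G(\mcD)$ corresponds to an edge of some matching $E_j$, i.e.\ a pair in the support of the $A$-respecting $h$-length demand $D_j$, which (by \Cref{def:demandMatching} and since $D_j$ is $h$-length in $G - (\sum_{l<j}C_l)_{hs}$, hence in $G$) is at distance $\le h$ in $G$; walking along at most two such tree edges, the triangle inequality gives $d_G(u,v)\le 2h$, so the demand is $2h$-length.

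The main obstacle is item 2 (\textbf{Separated}): showing $\disperse_{\mcD,A}$ is $h(s-1)$-separated by $(1+\frac{1}{s-1})\sum_i C_i$. Consider a pair $(u',v')$ in the support of some $\disperse_{T_i}$, joined by a path of at most two edges in $T_i$; say these edges lie in matchings $E_{j_1}, E_{j_2}$ with $j_1 \le j_2$ (if there is only one edge, take $j_1 = j_2$). The key point, established in the proof of \Cref{lem:arbBound} (the demand-matching graph is $s$-parallel-greedy, witnessed by the matchings in \emph{reverse} index order), is that a tree edge in $E_j$ corresponds to a pair that $C_j$ $hs$-separates after applying $C_1,\ldots,C_{j-1}$. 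I would argue that the higher-index cut, $C_{j_2}$, is the one doing the work: after applying $C_1, \ldots, C_{j_2-1}$ (which includes $C_{j_1}$ if $j_1 < j_2$), the endpoint of the $E_{j_1}$-edge incident to the shared middle vertex is at distance $\le h$ from the shared vertex and the shared vertex's $E_{j_2}$-partner is $hs$-far from it; since $\sum_i C_i \ge \sum_{j \le j_2 - 1} C_j + C_{j_2}$, and scaling everything up by $(1+\frac{1}{s-1})$ gives each cut weight $\ge 1$ on top, the combined cut makes $u'$ and $v'$ at distance $> hs - h = h(s-1)$; one must be slightly careful that the extra $h$ ``slack'' eaten by the short $E_{j_1}$-hop is exactly compensated by the $\frac{1}{s-1}$ scaling factor on $C_{j_2}$, which is why the theorem scales the union by $1 + \frac{1}{s-1}$ and lands at separation $h(s-1)$ rather than $hs$. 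Unpacking which cuts are already ``in place'' when a given tree edge was separated, and checking the arithmetic $hs \cdot (1+\frac{1}{s-1}) - h \ge h(s-1)$-style bound carefully, is the delicate part; the rest follows the template of \cite{haeupler2024new}.
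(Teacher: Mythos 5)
Your treatments of properties 1 and 3 match the paper's and are fine (for property 3 you correctly note one needs the $D_i$ to be witness/maximizing demands so that $|D_i| = A_{(h,s)}(C_i)$). The gap is in property 2, and it is not just a matter of "checking the arithmetic carefully": your key distance claim is false, because you have the roles of the two hops reversed. You assert that after applying $C_1,\ldots,C_{j_2-1}$ (which, as you yourself note, includes $C_{j_1}$ when $j_1<j_2$) the $E_{j_1}$-pair, say $(u,w)$, is still at distance at most $h$. But $C_{j_1}$ was chosen precisely because it $hs$-separates $D_{j_1}$ in $G-(\sum_{l<j_1}C_l)_{hs}$, and $D_{j_1}(u,w)>0$; so once $C_{j_1}$ is applied that pair is at distance strictly greater than $hs$, and further cuts only increase it. Thus the higher-indexed cut $C_{j_2}$ is not "the one doing the work," and the triangle-inequality step you envision cannot be run in the graph $G-(\sum_{l\le j_2}C_l)_{hs}$, where \emph{both} hops may be long.

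The correct argument (the paper's) conditions on the state right after the \emph{lower}-indexed cut: in $G_{\le j_1}:=G-(\sum_{l\le j_1}C_l)_{hs}$, the $E_{j_1}$-pair $(u,w)$ satisfies $d_{G_{\le j_1}}(u,w)>hs$, while the $E_{j_2}$-pair $(w,v)$ satisfies $d_{G_{\le j_1}}(w,v)\le h$, because $D_{j_2}$ is $h$-length with respect to $G-(\sum_{l<j_2}C_l)_{hs}$, whose edge lengths dominate those of $G_{\le j_1}$. The triangle inequality then gives $d_{G_{\le j_1}}(u,v)>h(s-1)$, i.e.\ distance greater than $h(s-1)$ under the length function $l+hs\cdot\sum_{l\le j_1}C_l$. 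The role of the $\bigl(1+\tfrac{1}{s-1}\bigr)$ scaling is also different from what you describe: it is the exact identity $h(s-1)\cdot\bigl(1+\tfrac{1}{s-1}\bigr)=hs$, which shows that applying the scaled cut at separation threshold $h(s-1)$ reproduces precisely this length function (and the remaining cuts $C_l$ with $l>j_1$ only add length), not a slack-compensation inequality of the form $hs\cdot\bigl(1+\tfrac{1}{s-1}\bigr)-h\ge h(s-1)$. You would also need to handle the one-edge case (a leftover child matched to its parent, $j_1=j_2$), which is easier but is a separate case from the two-edge sibling path.
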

\begin{proof}
To see that $\disperse_{\mcD, A}$ is $2h$-length observe that vertices $u$ and $v$ have $\disperse_{\mcD, A}(u,v) > 0$ only if there is a path consisting of at most two edges between a node in $\copies(u)$ and a node in $\copies(v)$ in the demand matching graph $G(\mcD)$ (\Cref{def:demandMatching}). Furthermore, $u' \in \copies(u)$ and $v' \in \copies(v)$ have an edge in $G(\mcD)$ only if there is some $i$ such that $D_i(u, v) > 0$ and since each $D_i$ is $h$-length, it follows that in such a case we know $d_G(u,v) \leq h$. Thus, it follows by the triangle inequality that $\disperse_{\mcD, A}$ is $2h$-length.
    
To see that $\disperse_{\mcD, A}$ is $A$-respecting we observe that each vertex in $G(D)$ is incident to at most $2\alpha$ matchings across all of the tree matching demands we use to construct $\disperse_{\mcD, A}$ (at most $2$ matchings per forest in our forest cover). Thus, for any $u \in V$ since $|\copies(u)| = A(u)$ we know
\begin{align*}
    \sum_{u' \in \copies(u)} \sum_j  \sum_{v} \sum_{v' \in \copies(v)}\disperse_{T_j}(u', v') \leq \sum_{u' \in \copies(u)} 2\alpha \leq 2\alpha \cdot A(u).
\end{align*}

It follows that for any $u \in V$  we have
\begin{align*}
\sum_v \disperse_{\mcD, A}(u, v) = \sum_v \frac{1}{2\alpha}\sum_j \sum_{u' \in \copies(u)}  \sum_{v' \in \copies(v)}\disperse_{T_j}(u', v') \leq A(u)
\end{align*}
A symmetric argument shows that $\sum_v \disperse_{\mcD, A}(v, u) \leq A(u)$ and so we have that $\disperse_{\mcD, A}$ is $A$-respecting.

We next argue that $\disperse_{\mcD, A}$ is  $h(s-1)$-separated by $\left(1+\frac{1}{s-1}\right)\sum_i C_i$. Consider an arbitrary pair of vertices $u$ and $v$ such that $\disperse_{\mcD, A}(u,v) > 0$; it suffices to argue that $u$ and $v$ are $h(s-1)$-separated by $\left(1+\frac{1}{s-1}\right)\sum_i C_i$. As noted above, $\disperse_{\mcD, A}(u,v) > 0$ only if there is a path $(u', w', v')$ in $G(\mcD)$ where $u' \in \copies(u)$, $v' \in \copies(v)$ and for some $w \in V$ we have $w' \in \copies(w)$. But, $\{u', w'\}$ and $\{w', v'\}$ are edges in $G(\mcD)$ only if there is some $i$ and $j$ such that $D_i(u,w) > 0$ and $D_j(w,v) > 0$. 

By definition of $G(\mcD)$ (\Cref{def:demandMatching}), each $D_i$ corresponds to a different matching in $G(\mcD)$ and so since $\{u', w'\}$ and $\{w', v'\}$ share the vertex $w'$, we may assume $i \neq j$ and without loss of generality that $i < j$. Let $C_{\leq i} := \sum_{l \leq i} C_l$ and let $G_{\leq i} = G - (C_{\leq i})_{hs}$ be $G$ with $C_{\leq i}$ applied.

Since $D_i$ is $hs$-separated by $C_{\leq i}$ and $D_i(u,w) > 0$, we know that
\begin{align}
    d_{G_{\leq i}}(u, w) > hs.\label{eq:a}
\end{align}
On the other hand, since $D_j$ is an $h$-length demand, $j > i$ and $D_j(w,v) > 0$, we know that the distance between $w$ and $v$ in $G_{\leq i}$ is
\begin{align}
    d_{G_{\leq i}}(w, v) \leq h. \label{eq:b}
\end{align}
We claim it follows that
\begin{align}\label{eq:gas}
    d_{G_{\leq i}}(u, v) > h \cdot (s-1).
\end{align}
Suppose otherwise that $d_{G_{\leq i}}(u, v) \leq h \cdot (s-1)$. Then combining this with \Cref{eq:b} and the triangle inequality we would have that $d_{G_{\leq i}}(u, w) \leq hs $. However, this would contradict  \Cref{eq:a} and so we conclude that $d_{G_{\leq i}}(u, v) > h \cdot (s-1)$.

Continuing, by definition of $d_{G_{\leq i}}$ and \Cref{eq:gas}, we know that the distance between $u$ and $v$ according to to the length function $l +hs \cdot \sum_{l \leq i} C_l$ is at least $h \cdot (s-1)$ (where $l$ is the original length function of $G$). In order to show that $u$ and $v$ are $h(s-1)$-separated by $\left(1+\frac{1}{s-1}\right)\sum_i C_i$, it suffices to show that $u$ and $v$ are $h(s-1)$-separated by $\left(1+\frac{1}{s-1}\right)\sum_{l \leq i} C_l$ and, in particular, that $u$ and $v$ are at distance at least $h \cdot (s-1)$ according to the length function $l + h  (s-1) \cdot \left(1+\frac{1}{s-1}\right)\sum_{l \leq i} C_l$. However, this length function is equal to $l + hs \cdot \sum_{l \leq i} C_l$ and we already know that the distance between $u$ and $v$ according to this length function is at least $h \cdot (s-1)$. This shows that $u$ and $v$ are $h(s-1)$-separated by $\left(1+\frac{1}{s-1}\right)\sum_i C_i$.


Lastly, we argue that $|\disperse_{\mcD,A}| \geq \frac{1}{4 \alpha} \sum_i A_{(h,s)}(C_i)$. By \Cref{lem:matchingDemandSize} we know that 
\begin{align*}
    |\disperse_{\mcD, A}| \geq \frac{1}{4\alpha} \sum_i |D_i|.
\end{align*}
Combining this with the fact that $A_{(h,s)}(C_i) = |D_i|$ gives us
\begin{align*}
	|\disperse_{\mcD,A}| \geq \frac{1}{4 \alpha} \sum_i A_{(h,s)}(C_i)
\end{align*}
 as required.
 \end{proof}
 
 \subsection{Completing Union of Length-Constrained Cuts via Parallel Greedy Arboricity}
 Lastly, we complete the proof of \Cref{lem:UCviaPGArb}.
\UCviaPGArb*
\begin{proof}
    We begin by unpacking the relevant definitions.
	Recall that the $(h',s')$-length sparsity of length-constrained cut $\left(1+\frac{1}{s-1}\right)\sum_i C_i$ with respect to node-weighting $A$ is (by \Cref{def:sparsity}) defined as
	\begin{align*}
        \spa_{(h',s')}\left(\left(1+\frac{1}{s-1}\right)\sum_i C_i,A\right) =  \frac{\left(1+\frac{1}{s-1} \right)|\sum_i C_i|}{A_{(h',s')}\left(\left(1+\frac{1}{s-1} \right)\sum_i C_i\right)}.
	\end{align*}
	Thus, to demonstrate that $\left(1+\frac{1}{s-1}\right)\sum_i C_i$ has $(h',s')$-length sparsity at most $\phi'$, we must show that $ \frac{\left(1+\frac{1}{s-1} \right)|\sum_i C_i|}{A_{(h',s')}\left(\left(1+\frac{1}{s-1} \right)\sum_i C_i\right)}$ is at most $\phi'$. In other words, we must demonstrate that the $(h',s')$-length demand-size (\Cref{def:demandSize}) of $\left(1+\frac{1}{s-1}\right)\sum_i C_i$ is at least $\left(1+\frac{1}{s-1}\right)|\sum_i C_i|/\phi'$. Doing so requires that we demonstrate the existence of an $A$-respecting $h'$-length demand $D$ which is $h's'$-separated by $\left(1+\frac{1}{s-1}\right)\sum_i C_i$ and of size at least $\left(1+\frac{1}{s-1}\right)|\sum_i C_i|/\phi'$. Since we have assumed $s \geq 2$, it suffices for the size of such a demand to be at least $2|\sum_i C_i|/\phi'$.  For the remainder of this proof, we demonstrate that the matching-dispersed demand is exactly such a demand.
	
    Let demands $\mcD = (D_1, D_2, \ldots)$ be the demands which witness the $(h,s)$-demand-size of our cuts after the preceding cuts have been applied. In particular, for each $i$ let $D_i$ be any demand that is $hs$-separated by $C_i$ in $G - \left( \sum_{j < i} C_i \right)_{hs}$ such that $|D_i|= A_{(h,s)}(C_i)$ where $A_{(h,s)}(C_i)$ is the demand-size of $C_i$ in $G - \left( \sum_{j < i} C_i \right)_{hs}$. Such a demand exists by definition of demand-size (\Cref{def:demandSize}). Likewise, let $\disperse_{\mcD, A}$ be the matching-dispersed demand as defined in \Cref{dfn:matchingDemand}. By \Cref{lem:arbBound}, we know that $G(\mcD)$ is an $|A|$-node $s$-parallel-greedy graph. Since we have assumed that all $n$-node $s$-parallel greedy graphs have arboricity at most $\alpha(n,s)$, it follows that $G(\mcD)$ has arboricity at most $\alpha(|A|,s)$. Combining this with \Cref{lem:sparseOfMatching}, we get that $\disperse_{\mcD, A}$ is a $2h$-length $A$-respecting demand that is $h \cdot (s-1)$ separated by $\left(1+\frac{1}{s-1}\right)\sum_{i} C_i$ and of size at least $|\disperse_{\mcD,A}| \geq \frac{1}{4  \alpha(|A|,s)} \sum_i A_{(h,s)}(C_i)$. Equivalently, since $h' = 2h$ and $s' = \frac{s-1}{2}$, we have that  $\disperse_{\mcD, A}$ is an $h'$-length $A$-respecting demand that is $h's'$-separated by $\left(1+\frac{1}{s-1}\right)\sum_{i} C_i$. Likewise, by our choice of $\phi' = 8\alpha(|A|,s) \cdot \frac{\sum_i |C_i|}{\sum_i A_{(h,s)}(C_i)}$ we get that the size of $\disperse_{\mcD,A}$ is 
	\begin{align*}
		|\disperse_{\mcD,A}| \geq \frac{1}{4  \alpha(|A|,s)} \sum_i A_{(h,s)}(C_i) = \frac{2\sum_i |C_i|}{\phi'}.
	\end{align*}
	as required.
\end{proof}

\bibliographystyle{alpha}
\bibliography{main}

\end{document}